\newtheorem{default}{}[section]
\newtheorem{definition}[default]{Definition}
\newtheorem{exer-hard}[default]{*Exercise}
\newtheorem{lemma}[default]{Lemma}
\newtheorem{theorem}[default]{Theorem}
\def\Notation{{\medskip\noindent{\bf Notation~}}}
\newcommand{\smallvspace}[0]{\setlength{\itemsep}{-0.5ex}}
\begin{document}

\title{The Complexity of Proving the Discrete Jordan Curve Theorem}
\author{Phuong Nguyen\\
Mcgill University\\
pnguyen@cs.toronto.edu\\
\and Stephen Cook\\
University of Toronto\\
sacook@cs.toronto.edu}
\date{\today}

\maketitle
\thispagestyle{empty}

\begin{abstract}
The Jordan Curve Theorem (JCT) states that a simple closed curve divides
the plane into exactly two connected regions.
We formalize and prove the theorem in the context of grid graphs,
under different input settings, in theories of bounded arithmetic
that correspond to small complexity classes.
The theory $\VZ(2)$ (corresponding to $\ACZ(2)$)
proves that any set of edges that form
disjoint cycles divides the grid into at least two regions.  The
theory \VZ\ (corresponding to \ACZ) proves that any sequence of edges that form
a simple closed curve divides the grid into exactly two regions.
As a consequence, the Hex tautologies
and the st-connectivity tautologies have polynomial size
\ACZTwoFrege-proofs, which improves results of Buss which only apply to the
stronger proof system \TCZFrege.
\end{abstract}

\section{Introduction}

\subsection{Proof Complexity Background}

This paper is a contribution to ``Bounded Reverse Mathematics''
\cite{Cook:Nguyen,Nguyen:08:thesis},
a theme whose goal is to formalize and prove discrete versions of mathematical theorems in
weak theories of bounded arithmetic.  
({\em Reverse Mathematics} is a program
introduced by Friedman and Simpson (see \cite{Simpson}) to
classify mathematical theorems according to the strength of the axiomatic
theories needed to prove them.) 
\ignore{
The weakest theory
RCA$_0$ studied in \cite{Simpson} can define all primitive recursive
functions, and is much more powerful than the theories of bounded
arithmetic that interest us.
}
Razborov's simplified proof (in the theory $V^1_1$) of Hastad's Switching Lemma \cite{Razborov:95:feasibleII}
demonstrates the advantage of formalizing non-trivial arguments by reducing
the complexity of the concepts needed in the proof,
and can be regarded as an early important example of this theme.
Here we are concerned with
theories which capture reasoning in complexity classes in the low end
of the hierarchy
\begin{equation}\label{e:classes}
  \ACZ \subset \ACZ(2) \subset \TCZ \subseteq \NCOne \subseteq \LogSpace
              \subseteq \Ptime
\end{equation}
The class \ACZ\ (problems expressible by polynomial size constant depth
Boolean circuits with unbounded fanin AND gates and OR gates) can
compute binary addition, but not binary multiplication, and cannot
determine the parity of the number of input 1's.  The class 
$\ACZ(2)$ strengthens \ACZ\ by allowing parity gates with unbounded
fanin.   \TCZ\ allows threshold gates and can compute binary multiplication,
while \NCOne\ has the computing power of polynomial size Boolean
formulas.  \LogSpace\ stands for deterministic log space, and \Ptime\
for polynomial time.

Our theories are ``second order'' (as are those in Simpson's book
\cite{Simpson}), or more properly two-sorted first order.  The
first sort is the set \N\ of natural numbers, and the second sort
is the set of finite subsets of \N.  We think of a finite
subset $X\subset \N$ as a finite bit string $X(0)X(1)\cdots$,
where $X(i)$ is 1 or 0 depending on whether $i\in X$.
A function $F$ on bit strings is {\em definable} in a two-sorted
theory \calT\ if its graph $Y=F(X)$ is expressible by a bounded
existential formula $\varphi(X,Y)$ such that
$$
   \calT \vdash \forall X\exists ! Y \varphi(X,Y)
$$
The complexity class associated with \calT\ is given by the set
of functions definable in  \calT.  We have a theory for each of
the complexity classes in (\ref{e:classes}), and these theories form
the hierarchy \cite{Cook:Nguyen}
\begin{equation}\label{e:theoriesNew}
 \VZ \subset \VZ(2)\subseteq \VTCZ \subseteq \VNCOne \subseteq \VL
\subseteq \TVZ
\end{equation}
 Our base theory is \VZ, where
the definable functions are those in \ACZ.  Thus \VZ\
can define $F_+(X,Y) = X+Y$ (binary addition) but \VZ\
cannot define $\parity(X)$ (the number of ones in $X$ mod 2).
The theory $\VZ(2)$ is associated with $\ACZ(2)$, and can define
$\parity(X)$ but not $F_\times(X,Y) = X\cdot Y$ (binary multiplication).
The theory \VTCZ\ can define $F_\times(X,Y)$ but not
any function that is not in \TCZ.

All these theories are finitely axiomatizable, and have the same
finite vocabulary. 

We are interested in finding the weakest theory that can prove
a given universal combinatorial principle.
The best known result here
is due to Ajtai \cite{Ajtai:88:focs}, which (stated in our terms)
says that the Pigeonhole Principle $PHP(n,X)$ (asserting there
is no one-one map from $\{0,1,\ldots,n\}$ to $\{0,1,\ldots n-1\}$)
is not a theorem of \VZ.  It is known that $\VTCZ$ proves
$PHP(n,X)$, but it is an open question whether $\VZ(2)$ proves
$PHP(n,X)$.

The study of the proof complexity of
combinatorial principles is often formulated in terms of propositional
proof systems, rather than theories such as (\ref{e:theoriesNew}).
In fact there are propositional proof systems corresponding to each
of the theories in (\ref{e:theoriesNew}), so that we have a three-way
correspondence between complexity classes, theories, and proof systems
as follows:
$$
\begin{array}{llllll}
 \mbox{class} & \ACZ &  \ACZ(2) &  \TCZ & \NCOne & \Ptime   \\
 \mbox{theory} & \VZ & \VZ(2) & \VTCZ &  \VNCOne & \TVZ    \\
 \mbox{system} & \ACZFrege & \ACZTwoFrege &  \TCZFrege & \Frege & \EFrege 
\end{array}
$$
For example a Frege system is a standard Hilbert-style propositional
proof system in which a formal proof is a sequence of propositional
formulas which are either axioms or follow from earlier formulas from
rules.  In an \ACZFrege\ proof the formulas must have depth at most
$d$, where $d$ is a parameter.  In an \ACZTwoFrege\
proof the formulas are allowed parity gates, and in a \TCZFrege\ proof the
formulas are allowed threshold gates.

There is a simple correspondence between \SigZB\ formulas $\varphi(x,X)$
(i.e. bounded formulas in the language of the theories, with no
second-order quantifiers) and a polynomial size family
$\tuple{\varphi(x,X)[n], n\in\N}$
of propositional formulas such that the propositional formulas are
all valid iff $\forall x\forall X\varphi(x,X)$ holds in the standard model.
Further, for each theory \calT\ and associated proof system $S_\calT$
there is a simple translation which takes
each \SigZB\ formula $\varphi(x,X)$ provable in \calT\ into a
polynomial size family of $S_\calT$-proofs of the tautologies
$\varphi(x,X)[n]$.

For example, in the case of the Pigeonhole Principle, the \SigZB\
formula $PHP(x,X)$ translates into a family $PHP^{n+1}_n$ of
tautologies, in which the variables have the form $p_{ij}$,
$0\le i \le n, 0\le j < n$, and $p_{ij}$ is intended to assert
that $i$ gets mapped to $j$.  Ajtai \cite{Ajtai:88:focs}
proved that the tautologies $PHP^{n+1}_n$ do not have polynomial size
\ACZFrege\ proofs.  From this it follows that $PHP(x,X)$ is not
provable in \VZ, as we mentioned earlier.

In general, the propositional proof systems can be regarded as
nonuniform versions of the corresponding theories (more precisely the
$\forall\SigZB$-consequences of the theories).  Showing that a given
\SigZB\ formula $\varphi(x,X)$ is provable in a theory \calT\
establishes that the tautology family $\varphi(x,X)[n]$ has
polynomial size $S_\calT$ proofs.  However the converse is false
in general:  The tautologies $\varphi(x,X)[n]$ might have polynomial
size $S_\calT$ proofs even though $\varphi(x,X)$ is not provable
in $\calT$. 

In the present paper our main results are positive and uniform:
we show various principles are provable in various theories, and 
polynomial size upper bounds on the proof size of the corresponding tautologies
follow as corollaries.

\subsection{Discrete Planar Curves}

We are concerned with principles related to
the Jordan Curve Theorem (JCT), which asserts that a simple closed curve
divides the plane into exactly two connected components.
The authors were inspired by a talk by Thomas Hales \cite{Hales}
explaining his computer-verified proof of the theorem (involving
44,000 proof steps), which in turn is based on Thomassen's five-page
proof \cite{Thomassen}.  The latter proof starts by proving
$K_{3,3}$ is not planar, which in turn implies the JCT.

Hales first proves the JCT for grid graphs, and this is the setting
for the present paper.
A grid graph has its vertices among the planar grid points
$\{(i,j) \mid 0\le i,j\le n\}$
and its edges among the horizontal and vertical lines
connecting adjacent grid points.

Buss \cite{Buss:06:tcs} has extensive results on the propositional
proof complexity of grid graphs, and nicely summarizes what was
known on the subject before the present paper.
The st-connectivity principle states that it is not possible
to have a red path of edges and a blue path of edges which connect
diagonally opposite corners of the grid graph unless the paths intersect.
In this paper we focus on the following two ways of expressing this principle as a family
of tautologies:  the harder tautologies $STCONN(n)$ \cite{Buss:06:tcs}
express the red and blue
edges as two sets, with the condition that every node except the corners
has degree 0 or 2 (thus allowing disjoint cycles as well as paths).
The easier tautologies $\STSEQ(n)$ express the paths as sequences of edges.
\ignore{
Another family of tautologies $\STORD(n)$ express each path as a set of directed edges
coupled with an ordering on the grid vertices so that all but one edges
are directed in accordance with the ordering.
$\STORD(n)$ are stronger than $\STSEQ(n)$ but weaker than $\STSEQ(n)$.
}

In 1997 Cook and Rackoff \cite{Cook:Rackoff} showed that
the easier tautologies $\STSEQ(n)$ expressing st-connectivity
have polynomial size \TCZFrege-proofs. Their proof is based on winding
numbers.  Buss \cite{Buss:06:tcs} improved this by showing
that the harder tautologies $STCONN(n)$ also have polynomial size
\TCZFrege-proofs.  Buss's proof shows how the red and blue edges
in each column of the grid graph determine an element of a certain
finitely-generated group.  The first and last columns determine
different elements, but assuming the red and blue paths do not cross,
adjacent columns must determine the same
element.  This leads to a contradiction.

The Hex tautologies, proposed by Urquhart \cite{Urquhart},
assert that every completed board in the game of Hex has a winner.
\cite{Buss:06:tcs} shows that the Hex Tautologies
can be reduced to the hard
st-connectivity tautologies $STCONN(n)$, and hence also have polynomial size
\TCZFrege-proofs.

\subsection{New Results}
We work in the uniform setting, formalizing proofs of principles
in the theories \VZ\ and $\VZ(2)$, which imply upper bounds on the
propositional proof complexity of the principles.   
In Section \ref{s:input-set} we show that $\VZ(2)$ proves the part of the
discrete JCT asserting a closed curve divides the plane into at
least two connected components, for the (harder) case in which the
curve and paths are given as sets of edges.   The proof is inspired
by Buss's \TCZ-Frege proof of $STCONN(n)$ and is based on
the idea that a vertical line passing through a grid curve can detect
which regions are inside and outside the curve by the parity of the
number of horizontal edges it intersects.  It follows that
$\VZ(2)$ proves the st-connectivity principle for edge sets.

As a corollary we conclude that the $STCONN(n)$ tautologies
and the Hex tautologies
have polynomial size \ACZTwoFrege\ proofs, thus strengthening
Buss's \cite{Buss:06:tcs} result that is stated for the stronger
\TCZFrege\ system.  Our result is stronger in two senses:
the proof system is weaker, and we show the existence of uniform
proofs by showing the st-connectivity principle is provable in
$\VZ(2)$.  In fact, showing provability in a theory such as $\VZ(2)$
is often easier than directly showing its corollary that
the corresponding tautologies have polynomial size proofs.  This
is because we can use the fact that the theory proves the induction
scheme and the minimization scheme for formulas expressing concepts
in the corresponding complexity class.

In Section \ref{s:input-seq} we prove the surprising result that
when the input curve and paths are presented as sequences of grid
edges then even the very weak theory \VZ\ proves the Jordan Curve
Theorem.  This is the most technically interesting result in this paper.
The key idea in the proof is to show (using only \ACZ-concepts)
that in every column of the grid, the horizontal edges of the curve
alternate between pointing right and pointing left.
It follows that \VZ\ proves the st-connectivity principle for
sequences of edges.  As a corollary we conclude that the $\STSEQ(n)$
tautologies have polynomial size \ACZFrege-proofs.  This strengthens
the early result \cite{Cook:Rackoff} (based on winding numbers)
that $\STSEQ(n)$ have polynomial size \TCZFrege-proofs.

This is the full version of \cite{Nguyen:Cook:07:lics}.
We have extended Section 5 substantially, and added Section 6.

\section{Preliminaries}

The material in this section is from
\cite{Cook:05:quaderni,Cook:Nguyen,Nguyen:Cook:05:lmcs}.

\subsection{Complexity Classes and Reductions}
\label{classes}

It will be convenient to define the relevant complexity classes
\ACZ\ and $\ACZ(2)$ in a form compatible with our theories, so
we start by giving the syntax of the latter.
We use a two-sorted language with variables
$x,y,z,...$ ranging over $\N$ and variables $X,Y,Z,...$ ranging over
finite subsets of $\N$ (interpreted as bit strings).  Our basic two-sorted
vocabulary \LTwoA\ includes the usual symbols
$0,1,+,\cdot,=,\le$ for arithmetic over $\N$, the length function
$|X|$ on strings, the set membership relation $\in$, and string
equality $=_2$ (where we usually drop mention of the subscript 2).
The function $|X|$ denotes 1 plus the largest element in the set $X$,
or 0 if $X$ is empty (roughly the length of the corresponding string).  
We will use the notation $X(t)$ for $t\in X$, and we will think of $X(t)$ as
the $t$-th bit in the string $X$.

{\em Number terms} are built from the constants 0,1, variables $x,y,z,...$,
and length terms $|X|$ using $+$ and $\cdot$.  The only {\em string terms}
are string variables $X,Y,Z,...$.   The atomic formulas are
$t=u$, $X=Y$, $t\le u$, $t\in X$ for any number terms $t,u$ and string
variables $X,Y$.  Formulas are built from atomic formulas using
$\wedge,\vee, \neg$ and both number and string quantifiers
$\exists x, \exists X, \forall x,\forall X$.  Bounded number quantifiers
are defined as usual, and the bounded string quantifier
$\exists X \le t \  \varphi$ stands for $\exists X(|X|\leq t \wedge \varphi)$
and $\forall X\le t \ \varphi$ stands for $\forall X(|X|\le t\supset \varphi)$,
where $X$ does not occur in the term $t$.

\SigZB\ is the set of all \LTwoA-formulas in which all number quantifiers
are bounded and with no string quantifiers.  \SigOneB\
(corresponding to {\em strict} $\Sigma^{1,b}_1$ in \cite{Krajicek:95:book})
formulas begin with
zero or more bounded existential string quantifiers, followed by a \SigZB\
formula.  These classes are extended to \SigIB, $i\ge 2$,
(and \PiIB, $i\ge 0$) in the usual way.

We use the notation \SigZB(\calL) to denote \SigZB\ formulas which
may have symbols from the vocabulary
\calL\ in addition to the basic vocabulary \LTwoA.

Two-sorted complexity classes contain relations $R(\xvec,\Xvec)$
(and possibly number-valued functions $f(\xvec,\Xvec)$ or
string-valued functions $F(\xvec,\Xvec)$), where the arguments
$\xvec = x_1,\ldots,x_k$ range over $\N$, and $\Xvec = X_1,\ldots,X_\ell$
range over finite subsets of $\N$.  In defining complexity classes
using machines, the number arguments $x_i$ are presented in unary
notation (a string of $x_i$ ones), and the arguments $X_i$ are
presented as bit strings.  Thus the string arguments are the important
inputs, and the number arguments are small auxiliary inputs
useful for indexing the bits of strings.

In the uniform setting, the complexity class \ACZ\ has several equivalent
characterizations \cite{Immerman:99:book}, including \LTH\
(the log time hierarchy on alternating Turing machines) and \FO\
(describable by a first-order formula using $<$ and $Bit$
predicates).  This motivates the following definition for the
two-sorted setting (see the number/string input conventions above).

\begin{definition}
A relation $R(\xvec,\Xvec)$ is in \ACZ\ iff some alternating Turing
machine accepts $R$ in time $O(\log n)$ with a constant number of
alternations.
\end{definition}

The following result \cite{Immerman:99:book,Cook:Nguyen}
nicely connects \ACZ\ and our two-sorted \LTwoA-formulas.

\begin{theorem}[\SigZB\ Representation Theorem]
A relation $R(\xvec,\Xvec)$ is in \ACZ\ iff it is represented by some
\SigZB\ formula $\varphi(\xvec,\Xvec)$.
\end{theorem}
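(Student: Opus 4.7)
The plan is to prove both directions by standard simulation arguments, paralleling the familiar first-order characterization of $\ACZ$ over ordinary binary strings \cite{Immerman:99:book}.

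For the easy direction $(\Leftarrow)$, I would induct on $\SigZB$ formulas $\varphi(\xvec, \Xvec)$. Atomic formulas $t=u$, $t\le u$, $t\in X$ reduce to arithmetic on $O(\log n)$-bit number terms (whose values are polynomially bounded) and a single bit lookup into a string variable, both deterministically computable in $O(\log n)$ time. Boolean connectives preserve the class. A bounded number quantifier $\exists x\le t$ (resp.\ $\forall x\le t$) is simulated by an existential (resp.\ universal) state that guesses the $O(\log n)$ bits of $x$, checks $x\le t$, and invokes the sub-machine for the body. The constant quantifier-alternation depth of $\varphi$ bounds the number of machine alternations.

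For the harder direction $(\Rightarrow)$, given an ATM $M$ accepting $R(\xvec,\Xvec)$ in $c\log n$ time with $k$ alternations, I would build a $\SigZB$ formula mirroring $M$'s alternation structure. Between two consecutive alternation points $M$ makes at most $c\log n$ nondeterministic bit choices along any path, which I would package into a single number $y_j$ bounded by $n^c$. The target formula has the shape
\[
 Q_1 y_1\le n^c \ \cdots \ Q_k y_k\le n^c \ \psi(\xvec, \Xvec, y_1,\ldots,y_k),
\]
where the quantifiers $Q_i$ alternate in lockstep with $M$ and $\psi$ asserts that the guesses drive $M$ along an accepting computation.

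The main obstacle is writing $\psi$ inside $\SigZB$, since there are no string quantifiers with which to posit an entire computation tableau. I would handle this with a bounded universal number quantifier $\forall i < c\log n$ that checks the transition $C_i \to C_{i+1}$ locally. Each configuration has only $O(\log n)$ bits (finite state, $O(\log n)$ worktape cells, $O(1)$ head positions of $O(\log n)$ bits each), and every such bit admits a direct $\SigZB$ definition in terms of $\xvec$, $\Xvec$, and the initial segment of the guesses $y_1,\ldots,y_k$, obtained by unrolling the constant-size transition table; because the machine runs for only $c \log n$ steps this unrolling stays within $\SigZB$. With $C_i$ and $C_{i+1}$ so defined, the transition check is open in $\LTwoA$, and wrapping it in $\forall i < c\log n$ together with the acceptance condition on $C_{c\log n}$ yields $\psi$.
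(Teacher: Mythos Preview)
The paper does not actually prove this theorem; it is stated with a citation to \cite{Immerman:99:book,Cook:Nguyen}, so there is no in-paper argument to compare against. Your $(\Leftarrow)$ direction is standard and correct.

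In the $(\Rightarrow)$ direction your architecture (pack each alternation block into a bounded number quantifier, then write a bounded core $\psi$) is the right one, but the justification of $\psi$ has a real gap. You assert that each bit of $C_i$ ``admits a direct $\SigZB$ definition\ldots obtained by unrolling the constant-size transition table; because the machine runs for only $c\log n$ steps this unrolling stays within $\SigZB$.'' A $\SigZB$ formula is a \emph{fixed} syntactic object independent of $n$; one cannot ``unroll'' the transition relation a number of times that grows with $n$ and still obtain a single $\SigZB$ formula. Put differently, the map $(i,\xvec,\Xvec,y_1,\ldots,y_k)\mapsto C_i$ is itself a deterministic $O(\log n)$-time computation, and declaring it $\SigZB$-expressible is precisely the base case (the deterministic tail) of the inclusion you are trying to prove. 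Nor can you sidestep this by existentially guessing the whole trace as numbers: the trace has $\Theta(\log^2 n)$ bits, more than any bounded tuple of number variables can carry.

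The proofs in the cited references handle this point with additional work---for instance by first passing through the (uniform) circuit characterization of $\ACZ$ and then translating a constant-depth circuit level by level into bounded number quantifiers, or by a careful analysis showing how the deterministic log-time phase can itself be absorbed into a constant number of extra alternations. Either way, the step you wave through is exactly where the substance of the $(\Rightarrow)$ direction lies.
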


In general, if {\bf C} is a class of relations (such as \ACZ) then
we want to associate a class {\bf FC} of functions with {\bf C}.  Here
{\bf FC} will contain string-valued functions $F(\xvec,\Xvec)$ and
number-valued functions $f(\xvec,\Xvec)$.  We require that
these functions be $p$-bounded; i.e. for each $F$ and $f$ there
is a polynomial $g(n)$ such that $|F(\xvec,\Xvec)|\le g(\max(\xvec,|\Xvec|)$
and $f(\xvec,\Xvec) \le g(\max(\xvec,|\Xvec|)$.

We define the {\em bit graph} $B_F(i,\xvec,\Xvec)$ by
$$
   B_F(i,\xvec,\Xvec) \lra F(\xvec,\Xvec)(i)
$$

\begin{definition}
If {\bf C} is a two-sorted complexity class of relations, then
the corresponding functions class {\bf FC} consists of all p-bounded
number functions whose graphs are in {\bf C}, together with all
p-bounded string functions whose bit graphs are in {\bf C}.
\end{definition}

For example, binary addition $F_+(X,Y) = X+Y$ is in \FACZ,
but binary multiplication $F_\times(X,Y) = X\cdot Y$ is not.

\begin{definition}
A string function is {\em $\SigZB$-definable} from a collection $\calL$
of two-sorted functions and relations if it is p-bounded and its bit graph
is represented by a $\SigZB(\calL)$ formula.
Similarly, a number function is {\em $\SigZB$-definable} from $\calL$ if
it is p-bounded and its graph is represented by a $\SigZB(\calL)$ formula.
\end{definition}

It is not hard to see that \FACZ\ is closed under \SigZB-definability,
meaning that if the bit graph of $F$ is represented by a \SigZB(\FACZ)
formula, then $F$ is already in \FACZ. 

In order to define
complexity classes such as \ACZ(2) and \TCZ\ we need to iterate
\SigZB-definability to obtain the notion of \ACZ\ reduction.

\begin{definition}
We say that a string function $F$ (resp. a number function $f$)
is \ACZ-reducible to $\calL$ if there is a
sequence of string functions $F_1, \ldots, F_n$ ($n \ge 0$) such that
\begin{equation}
\label{e:AC0-red}
F_i
\text{ is \SigZB-definable from }\calL \cup \{F_1, \ldots, F_{i-1}\},
\text{ for }i = 1, \ldots, n;
\end{equation}
and $F$ (resp. $f$) is $\SigZB$-definable from
$\calL \cup \{F_1, \ldots, F_n\}$.
A relation $R$ is $\ACZ$-reducible to $\calL$ if there is a
sequence $F_1, \ldots, F_n$ as above, and $R$ is represented by a
$\SigZB(\calL \cup \{F_1, \ldots, F_n\})$ formula.
\end{definition}

We define the number function $\numones(x,X)$ to be the number of elements
of $X$ which are less than $x$.  We define $\modulo_2$ by
$$
   \modulo_2(x,X) = \numones(x,X) \bmod 2
$$

\begin{definition}\label{FACZT}
$\ACZ(2)$ (resp. $\FACZ(2)$) is the class of relations (resp. functions)
$\ACZ$-reducible to $\modulo_2$.
\end{definition}

We note that the classes \TCZ\ and \FTCZ\ can be defined as in
the above definition from the function \numones, although we will
not need these classes here.

\subsection{The Theories}
\label{s:theories}

Our base theory \VZ\ \cite{Cook:05:quaderni,Cook:Nguyen}, called
$\Sigma^p_0-comp$ in \cite{Zambella:96:jsl} and $I\Sigma^{1,b}_0$
(without $\#$) in \cite{Krajicek:95:book} is associated with the
complexity class \ACZ.  The theory \VZ\ uses the two-sorted
vocabulary \LTwoA described in Section \ref{classes}, and
is axiomatized by the set \BASIC{2} given in
Figure~\ref{d:2-BASIC}, together with the \SigZB-Comprehension scheme
$$
   \exists X\le y\forall z<y(X(z)\lra \varphi(z)),
$$
where $\varphi(z)$ is any \SigZB\ formula not containing $X$ (but
may contain other free variables).

It is not hard to show that
\VZ\ proves the \IND{\SigZB} scheme

\begin{equation}\label{ind}
  [\varphi(0)\wedge\forall x,\varphi(x)\supset\varphi(x+1)]\supset
             \forall z\varphi(z),
\end{equation}
where $\varphi(x)$ is any \SigZB-formula.

\begin{figure}
\centering
\begin{tabular}{|ll|}
\hline
{\bf B1.} $x+1\neq 0$                           & {\bf B7.} $(x\le y \wedge y\le x)\supset x = y$\\
{\bf B2.} $x+1 = y+1\supset x=y$                & {\bf B8.} $x \le x + y$\\
{\bf B3.} $x+0 = x$                             & {\bf B9.} $0 \le x$\\
{\bf B4.} $x+(y+1) = (x+y) + 1$                 & {\bf B10.} $x\le y \vee y\le x$\\
{\bf B5.} $x\cdot 0 = 0$                        & {\bf B11.} $x\le y \lra x< y+1$ \\
{\bf B6.} $x\cdot (y+1) = (x\cdot y) + x$       & {\bf B12.} $x\neq 0 \supset \exists y\le x(y+1 = x)$\\
{\bf L1.} $X(y) \supset y < |X|$                & {\bf L2.}  $y+1 = |X| \supset X(y)$ \\
\multicolumn{2}{|l|}
{{\bf SE.} $[|X| = |Y|\wedge\forall i<|X|(X(i)\lra Y(i))]\ \supset\ X = Y$}\\
\hline
\end{tabular}
\caption{2-{\bf BASIC}}
\label{d:2-BASIC}
\end{figure}

It follows from a Buss-style witnessing theorem that the
\SigOneB-definable function in \VZ\ are precisely the functions
in \FACZ.  Thus binary addition $F_+(X,Y)$ is \SigOneB-definable
in \VZ\ but binary multiplication $F_\times(X,Y)=X\cdot Y$ is not.
Simple properties of definable functions can usually be proved in \VZ,
including commutativity and associativity of binary addition.

The pigeonhole principle PHP$^{n+1}_n$ can be formulated in \VZ\
by a \SigZB\ formula $PHP(n,X)$, where $X(\langle i,j\rangle)$
asserts that pigeon $i$ gets mapped to hole $j$.   However it follows
from Ajtai's Theorem \cite{Ajtai:88:focs} that \VZ\ does not
prove $PHP(n,X)$, nor does \VZ\ prove the (weaker) surjective pigeonhole
principle, in which it is assumed that every hole gets at least
one pigeon.

It is sometimes convenient to work in the theory \VZbar, which is a
universal conservative extension of \VZ\ with vocabulary \LFACZ\ containing
symbols (and defining axioms) for all \FACZ-functions.  The theory
\VZbar\ proves the induction scheme (\ref{ind}), where now $\varphi$
is any \SigZB(\LFACZ)-formula.

The theory $\VZ(2)$ \cite{Nguyen:Cook:05:lmcs,Cook:Nguyen}
has the same vocabulary \LTwoA\ as \VZ, and
extends \VZ\ by adding the single axiom
$$
    \exists Y \delta_{\MOD_2}(x,X,Y)
$$
where
$$
   \delta_{\MOD_2}(x,X,Y) \equiv \neg Y(0) \wedge
         \forall z<x, \  Y(z+1) \lra (Y(z) \oplus X(z))
$$
(here $\oplus$ is exclusive or).

Note that $\delta_{\MOD_2}(x,X,Y)$ defines $Y$ as a kind of parity
vector for the first $x$ bits of $X$, in the sense that if
$\delta_{\MOD_2}(x,X,Y)$ and $z\le x$ then
$$
   Y(z) \lra \modulo_2(z,X) = 1
$$

The \SigOneB-definable functions of $\VZ(2)$ are precisely
those in $\FACZ(2)$ (see Definition \ref{FACZT}).

As in the case of \VZ, it is sometimes convenient to work
in the theory \VZbar(2), which is a
universal conservative extension of \VZ(2)\ with vocabulary \LFACZ(2)
containing symbols (and defining axioms) for all \FACZ(2)-functions. 
The theory \VZbar(2) proves the induction scheme (\ref{ind}),
where now $\varphi$ is any \SigZB(\LFACZ(2))-formula.

\section{Input as a Set of Edges}
\label{s:input-set}

We start by defining the notions of (grid) points and edges, and
certain sets of edges which include closed curves, or connect
grid points.   All of these  notions are definable by \SigZB-formulas,
and their basic properties can be proved in $\VZ$.

We assume a parameter $n$ which bounds the $x$ and
$y$ coordinates of points on the curve in question. 
Thus a {\em grid point} (or simply a {\em point}) $p$ is a pair $(x,y)$ where
$0\le x,y \le n$.  We use a standard pairing function $\langle x,y\rangle$
to represent a point $(x,y)$, where 
$$
   \langle x,y\rangle = (x+y)(x+y+1)+2y
$$
The $x$ and $y$ coordinates of a point $p$ are denoted by $x(p)$ and $y(p)$
respectively.  Thus if $p=\tuple{i,j}$ then $x(p)=i$ and $y(p)=j$.
An (undirected) {\em edge} is a pair $(p_1,p_2)$ (represented by
$\langle p_1,p_2\rangle$) of adjacent points;
i.e. either $|x(p_2)-x(p_1)|=1$ and $y(p_2)=y(p_1)$, or
$x(p_2)=x(p_1)$ and $|y(p_2)-y(p_2)|=1$.  
For a horizontal edge $e$, we also write $y(e)$ for the (common)
$y$-coordinate of its endpoints.

Let $E$ be a set of edges (represented by a set of numbers
representing those edges).
The $E$-{\em degree} of a point $p$ is the number of edges in
$E$ that are incident to $p$.

\begin{definition}\label{d:connect}
A {\em curve} is a nonempty set $E$ of edges such that the
$E$-degree of every grid point is either 0 or 2.  
A set $E$ of edges is said to {\em connect} two points $p_1$ and $p_2$
if the $E$-degrees of $p_1$ and $p_2$ are both 1 and the $E$-degrees of all
other grid points are either 0 or 2.
Two sets $E_1$ and $E_2$ of edges are said to {\em intersect} if there is
a grid point whose $E_i$-degree is $\ge 1$ for $i = 1, 2$.
\end{definition}

Note that a curve in the above sense is actually a collection of one or more
disjoint closed curves.  Also if $E$ connects $p_1$ and $p_2$ then
$E$ consists of a path connecting $p_1$ and $p_2$ together with
zero or more disjoint closed curves.

We also need to define the notion of two points being on different
sides of a curve.
We are able to consider only points which are ``close'' to the curve.
It suffices to consider the case in which one point is above and one point
is below an edge in $E$.  (Note that the case in which one point is
to the left and one point is to the right of $E$ can be reduced to
this case by rotating the $(n+1) \times (n+1)$ array of all grid points by
90 degrees.)

\begin{definition}\label{d:sides}
Two points $p_1, p_2$ are said to be {\em on different sides} of $E$ if
\begin{align*}
  &  x(p_1) = x(p_2) \wedge |y(p_1) - y(p_2)| =  2\\
  &  \mbox{the } E\mbox{-degree of } p_i = 0 \mbox{\ \ for } i = 1,2\\
  &  \mbox{the } E\mbox{-degree of } p = 2
\end{align*}
where $p$ is the point with $x(p) = x(p_1)$ and
$y(p) = \frac{1}{2}(y(p_1) + y(p_2))$.
(See Figure \ref{f:2sides}.)
\end{definition}

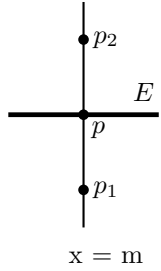
\begin{figure*}[htb]
\begin{center}
\begin{pspicture}(3,3.5)
\psline[linewidth=1.8pt]{-}(0,2)(2,2)
\rput(1.8,2.3){$E$}
\rput(1.2,1.8){$p$}
\rput(1.3,1){$p_{1}$}
\rput(1.3,3){$p_{2}$}
\psline(1.0,0.5)(1.0,3.5)
\rput(1.3,0.1){x = m}
\qdisk(1,1){2pt}\qdisk(1,2){2pt}\qdisk(1,3){2pt}
\end{pspicture}
\caption{$p_1, p_2$ are on different sides of $E$.}
\label{f:2sides}
\end{center}
\end{figure*}

Now we show that any set of edges that forms at least one simple curve must
divide the plane into at least two connected components.
This is formalized in the following theorem.

\begin{theorem}[Main Theorem for $\VZ(2)$]
\label{t:main-set}
The theory $\VZ(2)$ proves the following:
Suppose that $B$ is a set of edges forming a curve,
$p_1$ and $p_2$ are two points on different sides of $B$,
and that $R$ is a set of edges that connects $p_1$ and $p_2$.
Then $B$ and $R$ intersect.
\end{theorem}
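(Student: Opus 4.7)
The plan is to associate a parity function $\psi$ to grid points such that $\psi(p_1) \ne \psi(p_2)$ while $\psi$ is invariant along any grid edge whose endpoints both have $B$-degree $0$, and then to derive a contradiction by a mod-$2$ double-counting argument using the $\modulo_2$ axiom available in $\VZ(2)$. Working in the universal conservative extension $\VZbar(2)$ lets me treat the auxiliary counting functions as first-class $\FACZ(2)$ objects.

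For a grid point $q = (a, b)$, I would define
\[
  \psi(q) \;=\; \bigl|\{\, y : b < y \le n,\ ((a, y), (a+1, y)) \in B \,\}\bigr| \bmod 2,
\]
the parity of horizontal $B$-edges lying strictly above $q$ in the strip immediately to its right; via $\modulo_2$ this is $\SigZB(\LFACZ(2))$-definable. Writing $p_1 = (x_0, y_0)$, $p_2 = (x_0, y_0 + 2)$ and $p = (x_0, y_0 + 1)$, the hypothesis that $p$ has $B$-degree $2$ while its vertical neighbours $p_1, p_2$ have $B$-degree $0$ forces both $B$-edges at $p$ to be horizontal, so $((x_0, y_0 + 1), (x_0 + 1, y_0 + 1)) \in B$ while $((x_0, y_0 + 2), (x_0 + 1, y_0 + 2)) \notin B$; the counting sets for $\psi(p_1)$ and $\psi(p_2)$ thus differ by exactly one element, giving $\psi(p_1) \ne \psi(p_2)$. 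For the invariance of $\psi$ along an edge $e$ whose endpoints both have $B$-degree $0$, a vertical $e = ((a, b), (a, b + 1))$ is immediate: the two counting sets differ only in whether $((a, b + 1), (a + 1, b + 1)) \in B$, which is ruled out by the $B$-degree hypothesis on $(a, b + 1)$. The horizontal case $e = ((a, b), (a + 1, b))$ needs a small trick: I sum the $B$-degrees of the vertices $(a + 1, y)$ with $b < y \le n$; each such degree is $0$ or $2$ so the sum is even, and decomposing it by edge type gives
\[
  \psi(a, b) + \psi(a + 1, b) + \bigl[\, ((a + 1, b), (a + 1, b + 1)) \in B \,\bigr] \;\equiv\; 0 \pmod 2,
\]
and the $B$-degree $0$ hypothesis on $(a + 1, b)$ zeros the bracket, forcing $\psi(a, b) = \psi(a + 1, b)$.

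The final step is a mod-$2$ double count. Suppose towards a contradiction that $B$ and $R$ do not intersect; then every endpoint of an edge in $R$ has $B$-degree $0$, and the invariance above applies to every edge of $R$. Consider
\[
  S \;\equiv\; \sum_{e \in R} \bigl(\psi(q_1(e)) + \psi(q_2(e))\bigr) \pmod 2.
\]
Edge by edge, each summand vanishes, so $S \equiv 0$; on the other hand, reorganising by vertex, $S \equiv \sum_{q} \psi(q)\cdot \deg_R(q) \equiv \psi(p_1) + \psi(p_2)$, because $\deg_R$ equals $1$ at $p_1, p_2$ and is even at every other grid point. Hence $\psi(p_1) = \psi(p_2)$, contradicting the second paragraph. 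I expect the main obstacle to be the formalisation of this exchange-of-summation identity in $\VZbar(2)$: the two mod-$2$ sums themselves are $\SigZB(\LFACZ(2))$-definable via $\modulo_2$, and the identity $\sum_{e \in R} (\psi(q_1) + \psi(q_2)) \equiv \sum_{q} \psi(q)\, \deg_R(q)$ should be proved by $\SigZB(\LFACZ(2))$-induction on a listing of the edges of $R$, adding one edge at a time and checking that the identity is preserved; the minor boundary cases for the invariance (when $a \in \{0, n\}$) are routine.
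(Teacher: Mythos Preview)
Your proposal is correct, and the main obstacle you flag is in fact routine: once you encode $S$ as the parity of the incidence set $I=\{(e,q):e\in R,\ q\text{ an endpoint of }e,\ \psi(q)=1\}$, both ``evaluations'' of $S$ are literally the same object $\parity(I)$, and each is reduced to $0$ (respectively $\psi(p_1)\oplus\psi(p_2)$) by exhibiting a $\SigZB(\LFACZ(2))$-definable fixed-point-free involution on $I$ (respectively on $I$ minus the two endpoint incidences). That involution lemma is provable in $\VZbar(2)$ by $\SigZB(\LFACZ(2))$-induction on the bound of $I$. The same device handles the horizontal-invariance step: the even sum $\sum_{y>b}\deg_B(a{+}1,y)$ is the parity of an incidence set carrying the involution ``swap to the other $B$-edge at the same vertex'', while your decomposition into left/right horizontal and vertical contributions is a disjoint-union parity identity; the vertical part reduces to the single boundary bit via the involution ``swap to the other endpoint of the same vertical edge''. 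So the formalisation goes through without surprises.

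Your route differs from the paper's in organisation. The paper works column by column: for each column $k$ it forms the set $X_k$ of \emph{odd red edges} (horizontal red edges with an odd number of blue horizontal edges below them in that column) and shows $\parity(X_k)=\parity(X_{k+1})$ for $k\neq m-1$, while $\parity(X_{m-1})\neq\parity(X_m)$. The inter-column invariance is proved by interpolating between the two columns with a family of mixed lists $L_0,\dots,L_{n+1}$ and checking that the parity of odd red edges is preserved from each $L_j$ to $L_{j+1}$. You instead define a single vertex potential $\psi$ depending only on $B$, establish edge-by-edge invariance along $R$, and finish with a global mod-$2$ handshake. Both arguments rest on the same combinatorial fact---that moving the foot of a vertical ray across a column changes the blue-crossing parity only via the boundary vertical edge---but your packaging is cleaner: it avoids the interpolating lists, needs no WLOG assumption on how $R$ approaches $p_1,p_2$, and separates the roles of $B$ (defining $\psi$) and $R$ (transporting $\psi$) sharply. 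The paper's approach, on the other hand, localises the induction to a single scalar $\parity(X_k)$ per column, which makes the formal induction statement marginally simpler to write down.
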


\subsection{The Proof of the Main Theorem for $\VZ(2)$}

In the following discussion we also refer to the edges in $B$ as ``blue'' edges,
and the edges in $R$ as ``red'' edges.

We argue in $\VZ(2)$, and prove the theorem by contradiction.
Suppose to the contrary that $B$ and $R$ satisfy the hypotheses of
the theorem, but do not intersect.

\Notation\label{d:on-column}
A horizontal edge is said to be {\em on column $k$} (for $k \le n-1$)
if its endpoints have $x$-coordinates $k$ and $k+1$.

Let $m = x(p_1)=x(p_2)$.
W.l.o.g., assume that $2 \le m \le n-2$.
Also, we may assume that the red path comes to both $p_1$ and $p_2$
from the left,
i.e., the two red edges that are incident to $p_1$ and $p_2$
are both horizontal and on column $m -1$
(see Figure \ref{f:redpath}).  (Note that if the red path
does not come to both points from the left, we could fix this
by effectively doubling the
density of the points by doubling $n$ to $2n$, replacing each edge
in $B$ or $R$ by a double edge, and then extending each end of
the new path by three (small)
edges forming a ``C'' shape to end at points a distance
1 from the blue curve, approaching from the left.) 

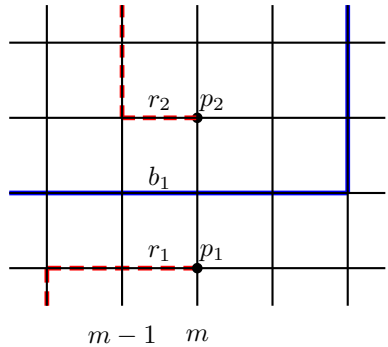
\begin{figure*}[htb]
\begin{center}
\begin{pspicture}(5,4.5)
\psline[linecolor=blue,linewidth=1.8pt]{-}(-0.5,2)(4,2) \rput(1.5,2.2){$b_{1}$}
\psline[linecolor=blue,linewidth=1.8pt]{-}(4,2)(4,4.5)
\psline[linecolor=red,linestyle=dashed,linewidth=1.8pt](0,1)(2.0,1) \rput(2.2,1.2){$p_{1}$} \rput(1.5,1.2){$r_{1}$}
\qdisk(2,1){2pt}\qdisk(2,3){2pt}
\psline[linecolor=red,linestyle=dashed,linewidth=1.8pt](0,1)(0,0.5)
\psline[linecolor=red,linestyle=dashed,linewidth=1.8pt](1,3)(2.0,3) \rput(2.2,3.2){$p_{2}$} \rput(1.5,3.2){$r_{2}$}
\psline[linecolor=red,linestyle=dashed,linewidth=1.8pt](1,3)(1,4.5)
\psline(0.0,0.5)(0.0,4.5)
\psline(3.0,0.5)(3.0,4.5)
\psline(4.0,0.5)(4.0,4.5)
\psline(-0.5,1.0)(4.5,1.0)
\psline(-0.5,2.0)(4.5,2.0)
\psline(-0.5,3.0)(4.5,3.0)
\psline(-0.5,4.0)(4.5,4.0)
\psline(1.0,0.5)(1.0,4.5) \rput(1.0,0.1){$m-1$}
\psline(2.0,0.5)(2.0,4.5) \rput(2.0,0.1){$m$}
\end{pspicture}
\caption{The red (dashed) path must cross the blue (undashed) curve.}
\label{f:redpath}
\end{center}
\end{figure*}

We say that edge $e_1$ {\em lies below} edge $e_2$ if $e_1$ and
$e_2$ are horizontal and in the same column and $y(e_1) < y(e_2)$.
For each horizontal red edge $r$
we consider the parity of the number of horizontal blue edges $b$
that lie below $r$.  The following notion is definable in $\VZ(2)$.

\Notation
An edge $r$ is said to be an {\em odd edge} if it is red and horizontal
and
$$\parity(\{b: b\mbox{ is a horizontal blue edge that lies below $r$}\}) = 1$$

For example, it is easy to show in $\VZ(2)$ that exactly one of $r_1, r_2$
in Figure \ref{f:redpath} is an odd edge.

For each $k \le n-1$, define using \COMP{\SigZB(\parity)} the set
$$X_k = \{r: r \mbox{ is an odd edge in column $k$}\}$$

The Main Theorem for $\VZ(2)$ follows from the lemma below
as follows.  We may as well assume that there are no edges in either
$B$ or $R$ in columns 0 and $n-1$, so
$\parity(X_0) = \parity(X_{n-1}) = 0$.  On the other hand, it
follows by \IND{\SigZB(\LFACZ(2))} using {\bf b)} that 
$\parity(X_0) = \parity(X_{m-1})$ and $\parity(X_m) = \parity(X_{n-1})$,
which contradicts {\bf a)}.

\begin{lemma}
\label{t:parity}
It is provable in $\VZ(2)$ that
\begin{enumerate}
\smallvspace
\item
[\bf a)]
$\parity(X_{m-1}) = 1 - \parity(X_{m})$.
\item
[\bf b)]
For $0 \le k \le n-2$, $k\ne m$, $\parity(X_k) = \parity(X_{k+1})$.
\end{enumerate}
\end{lemma}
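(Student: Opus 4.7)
The plan is to rewrite $|X_k|\bmod 2$ as a bilinear sum in the horizontal red/blue indicators of column $k$, and then to track how that sum transforms as we sweep from column $k$ to column $k+1$. For each column $k$ and height $j$ let $r_k(j), b_k(j)\in\{0,1\}$ be the indicators of a red, respectively blue, horizontal edge at column $k$, height $j$, and let $g_k(j)=\parity(\{j'<j:b_k(j')=1\})$. All of these are $\SigZB(\LFACZ(2))$-definable, and unwinding the definition of ``odd edge'' gives
$$
|X_k|\;\equiv\;\sum_j r_k(j)\,g_k(j)\pmod 2,
$$
an identity that $\VZbar(2)$ proves by $\SigZB(\LFACZ(2))$-induction on the range of $j$.

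Next I would set up the local degree equations along the line $x=k+1$. Write $u_j,v_j$ for the indicators of a red, respectively blue, vertical edge from $(k+1,j)$ to $(k+1,j+1)$. Because every $B$-degree lies in $\{0,2\}$, we have $b_k(j)+b_{k+1}(j)\equiv v_{j-1}+v_j\pmod 2$ at every vertex $(k+1,j)$; telescoping over $j$ yields the clean formula $g_{k+1}(j)=g_k(j)\oplus v_{j-1}$. The analogous red identity $r_k(j)+r_{k+1}(j)\equiv u_{j-1}+u_j\pmod 2$ holds at every vertex whose $R$-degree is in $\{0,2\}$, i.e., at every vertex on $x=k+1$ except when $k+1=m$, where the two vertices $p_1,p_2$ have $R$-degree $1$ and each introduce a correction of $+1$.

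Substituting the blue identity into $\sum_j r_{k+1}(j)g_{k+1}(j)$, applying the red identity, and performing one index shift, the cross-terms collapse to
$$
|X_k|+|X_{k+1}|\;\equiv\;\sum_j b_k(j)\,u_j\;+\;\sum_j r_{k+1}(j)\,v_{j-1}\;+\;D_k\pmod 2,
$$
where $D_k=0$ unless $k=m-1$, in which case $D_k\equiv g_{m-1}(y(p_1))+g_{m-1}(y(p_2))\pmod 2$. Each summand of the first two sums would force a red edge and a blue edge to share the vertex $(k+1,j)$, which contradicts the hypothesis that $B$ and $R$ do not intersect, so both sums vanish pointwise. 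This directly yields part~{\bf b)}. For part~{\bf a)}, a short case analysis at $p$ shows that both blue edges at $p$ must be horizontal (a vertical blue edge at $p$ would give $p_1$ or $p_2$ positive $B$-degree), so $b_{m-1}(y(p))=1$, while $b_{m-1}(y(p_1))=0$ because $p_1$ has $B$-degree $0$; since these are the only two heights strictly between $y(p_1)$ and $y(p_2)$, we obtain $g_{m-1}(y(p_1))\oplus g_{m-1}(y(p_2))=1$ and hence $D_{m-1}\equiv 1$.

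The main obstacle is not the short parity cancellation but verifying that every step stays inside $\VZ(2)$. Each of $r_k,b_k,u_j,v_j,g_k$ and all the partial sums above is $\SigZB$-definable over $\LFACZ(2)$; the telescoping identity and index shift go through by the $\SigZB(\LFACZ(2))$-induction scheme available in $\VZbar(2)$; and the pointwise non-intersection step, together with the small geometric lemma at $p$, is an immediate $\SigZB$ consequence of the hypotheses, provable already in $\VZ$. The whole argument then assembles into a $\VZ(2)$ proof of both parts.
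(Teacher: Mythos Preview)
Your argument is correct and genuinely different from the paper's. The paper proves part \textbf{b)} by interpolating between columns $k$ and $k{+}1$ with a family of edge lists $L_0,\ldots,L_{n+1}$ (each $L_j$ replaces one more edge of column $k$ by the corresponding edge of column $k{+}1$, inserting the vertical edge at the crossover height), sets $Y_j$ to be the odd red edges in $L_j$, and checks by a local case analysis at the single vertex $(k{+}1,j)$ that $\parity(Y_j)=\parity(Y_{j+1})$; part \textbf{a)} is the same sweep with exactly one step flipping parity. You instead package $\parity(X_k)$ as the $\mathbb{F}_2$-bilinear form $\sum_j r_k(j)g_k(j)$ and compute $\parity(X_k)\oplus\parity(X_{k+1})$ in one shot: the degree-$\{0,2\}$ constraints at the line $x=k{+}1$ give the telescoping identities $g_{k+1}(j)=g_k(j)\oplus v_{j-1}$ and $r_k(j)+r_{k+1}(j)\equiv u_{j-1}+u_j$, and after one index shift the difference collapses to $\sum_j u_j b_k(j)+\sum_j r_{k+1}(j)v_{j-1}$, every nonzero term of which names a shared vertex of $R$ and $B$. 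Both routes sit comfortably in $\VZbar(2)$ (only $\SigZB(\LFACZ(2))$-induction and parity are used); the paper's sweep is more pictorial, while your calculation makes the role of the non-intersection hypothesis completely explicit and avoids the per-step case analysis. One cosmetic slip: in your computation of $D_{m-1}$ the two heights that contribute to $g_{m-1}(y(p_1))\oplus g_{m-1}(y(p_2))$ are $y(p_1)$ and $y(p)$ (the half-open range $[\min,\max)$), not two heights ``strictly between'' $y(p_1)$ and $y(p_2)$; your conclusions $b_{m-1}(y(p_1))=0$ and $b_{m-1}(y(p))=1$ are correct as stated.
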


\begin{proof}
First we prove {\bf b)}.
For $k \le n-1$ and $0 \le j \le n$,
let $e_{k,j}$ be the horizontal edge on column $k$ with $y$-coordinate $j$.
Fix $k \le n-2$.
Define the ordered lists (see Figure \ref{f:Li})
$$L_0 = e_{k,0}, e_{k,1}, \ldots, e_{k,n}; \qquad
L_{n+1} = e_{k+1,0}, e_{k+1,1}, \ldots, e_{k+1, n}$$
and for $1 \le j \le n$:
$$L_j = e_{k+1, 0}, \ldots, e_{k+1,j-1}, \tuple{(k+1,j-1),(k+1,j)}, e_{k,j},\ldots, e_{k,n}$$

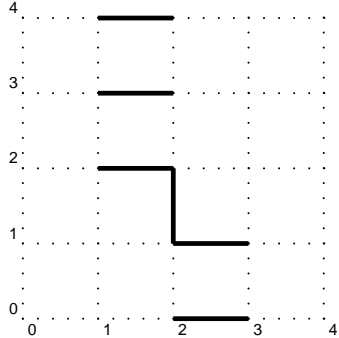
\begin{figure*}[htb]
\begin{center}
\begin{pspicture}(5,5)
\psgrid[gridlabels=6pt,subgriddiv=1,griddots=5](0,0)(4,4)
\psline[linewidth=1.8pt]{-}(2,0)(3,0)
\psline[linewidth=1.8pt]{-}(2,1)(3,1)
\psline[linewidth=1.8pt]{-}(1,2)(2,2)
\psline[linewidth=1.8pt]{-}(1,3)(2,3)
\psline[linewidth=1.8pt]{-}(1,4)(2,4)
\psline[linewidth=1.8pt]{-}(2,1)(2,2)
\end{pspicture}
\caption{$L_2$ (for $n = 4, k = 1$).}
\label{f:Li}
\end{center}
\end{figure*}

A red edge $r$ is said to be {\em odd in $L_j$} if $r \in L_j$,
and 
$$\parity(\{b: b\mbox{ is a blue edge that precedes $r$ in $L_j$}\}) = 1$$
(In particular, $X_k$ and $X_{k+1}$ consist of odd edges in $L_0$ and $L_{n+1}$, respectively.)
For $0 \le j \le n+1$, let
$$Y_j = \{r : r \mbox{ is an odd edge in } L_j\}$$
Thus $Y_0 = X_k$ and $Y_{n+1} = X_{k+1}$.

{\bf Claim:} If $k\ne m-1$ then
$$
\parity(Y_j) = \parity(Y_{j+1})
$$
for $j \le n$.

This is because the symmetric difference of $Y_j$ and $Y_{j+1}$ has
either no red edges, or two red edges with the same parity.

Thus by \IND{\SigZB(\LFACZ(2))} on $j$ we have $\parity(Y_0)=\parity(Y_{n+1})$,
and hence $\parity(X_k) = \parity(X_{k+1})$.

The proof of {\bf a)} is similar.
The only change here is that $\parity(L_j)$ and $\parity(L_{j+1})$ must differ
for exactly one value of $j$: either $j = y(p_1) - 1$ or $j = y(p_2) -1$.
\end{proof}

\section{Input as a Sequence of Edges}
\label{s:input-seq}

Now suppose that $B$ is a sequence of edges
$$
\tuple{q_0, q_1}, \tuple{q_1,q_2},
\ldots, \tuple{q_{t-2}, q_{t-1}}, \tuple{q_{t-1},q_0}
$$
that form a single closed curve (i.e, $t \ge 4$ and $q_0, \ldots, q_{t-1}$
are distinct).
In this section we will show that the weak base theory \VZ\ proves
two theorems that together imply the Jordan Curve Theorem for grid graphs:
The curve $B$ divides the grid into exactly two connected regions.
Theorem \ref{t:JCT-seq} is the analog of Theorem \ref{t:main-set}
(Main Theorem for $\VZ(2)$), and states that a sequence of edges forming
a path connecting points $p_1$ and $p_2$ on different sides of the curve
must intersect the curve.  Theorem \ref{t:mostTwo} states that
any point $p$ in the grid off the curve
can be connected by a path (in a refined grid) that does not intersect
the curve, and leads from $p$ to one of the points $p_1$ or $p_2$.

There is no analog in Section \ref{s:input-set}
to the last theorem
because in that setting it would be false:
the definition of a {\em curve} as a set of edges
allows multiple disjoint curves.

\subsection{There are at least Two Regions}

\begin{theorem}
[Main Theorem for \VZ]
\label{t:JCT-seq}
The theory \VZ\ proves the following:
Let $B$ be a sequence of edges that form a closed curve,
and let $p_1, p_2$ be any two points on different sides of $B$.
Suppose that $R$ is a sequence of edges that connect $p_1$, $p_2$.
Then $R$ and $B$ intersect.
\end{theorem}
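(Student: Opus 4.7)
The plan is to follow the structure of Theorem~\ref{t:main-set} (the Main Theorem for $\VZ(2)$), but to replace the parity-based definition of ``odd edge'' by a direction-based notion that is $\SigZB$-definable (and hence available in $\VZ$). The crucial new resource is that $B$ is given as a \emph{sequence} of edges, so each horizontal edge of $B$ carries a canonical orientation: it either points right or left according to the traversal order. The central lemma I would establish is exactly the one flagged in the introduction.

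\textbf{Alternation Lemma.} For each column $k$, when the horizontal edges of $B$ in column $k$ are listed in order of increasing $y$-coordinate, their directions strictly alternate between right and left.

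Granted the Alternation Lemma, I would redefine, for each horizontal red edge $r$ in column $k$, the predicate ``$r$ is odd'' to mean that the nearest horizontal edge of $B$ in column $k$ strictly below $r$ points to the right, say (with a fixed default if there is no such edge). By alternation, the parity of the total number of horizontal blue edges in column $k$ below $r$ is determined solely by the direction of this nearest edge, so the new predicate captures exactly the information used in Theorem~\ref{t:main-set}, but is now $\SigZB$-definable (finding the nearest such edge is an $\ACZ$ operation). The sets $X_k$ are then formed by $\SigZB$-comprehension, and the rest of the proof follows the template of Theorem~\ref{t:main-set}: define the transformations $L_j$ between columns $k$ and $k+1$, prove by $\SigZB(\LFACZ)$-induction on $j$ that passing from $L_j$ to $L_{j+1}$ preserves the $X$-set up to a canceling pair of red edges, derive $X_{m-1}\ne X_m$ but $X_k = X_{k+1}$ for $k\ne m-1$, and reach a contradiction. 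Because equalities of $\ACZ$-definable sets replace the earlier equalities of parities, the induction is available in $\VZ$ alone.

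The main obstacle is the Alternation Lemma itself. I would argue by contradiction: suppose $e$ and $e'$ are consecutive-in-$y$ horizontal edges of $B$ in column $k$, both (say) pointing right. Splitting the closed curve $B$ at $e$ and $e'$ yields two sub-arcs, with endpoints $\{(k+1, y(e)), (k, y(e'))\}$ and $\{(k+1, y(e')), (k, y(e))\}$ respectively. Because $B$ is simple (the $q_i$ are distinct) and $e, e'$ are $y$-consecutive among horizontal edges of $B$ in column $k$, any further crossing of column $k$ by either sub-arc must occur at a $y$-coordinate outside the open interval $(y(e), y(e'))$; combined with vertex-disjointness of the two sub-arcs this yields a forbidden configuration in the vertical strip $k \le x \le k+1$. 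The delicate point is to carry out this strip analysis using only $\SigZB$ concepts, exploiting the sequence structure of $B$ --- in particular the $\ACZ$-definable successor relation on horizontal edges of $B$ in a given column --- so that all case analysis and induction stay within $\VZ$.
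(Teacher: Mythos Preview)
Your identification of the Edge Alternation Lemma as the heart of the argument is exactly right, and your definition of ``odd'' via the direction of the nearest horizontal blue edge below is sound (given alternation, it really does capture the parity). But the reduction you sketch from the Main Theorem to the Alternation Lemma does not go through as written. You claim you will ``derive $X_{m-1}\ne X_m$ but $X_k = X_{k+1}$ for $k\ne m-1$'', with set equalities replacing parity equalities. This is the gap: the sets $X_k$ and $X_{k+1}$ consist of edges in different columns, so literal equality is not even the right statement; and if you trace the $L_j$ argument from Lemma~\ref{t:parity}, what it actually shows is that the symmetric difference $Y_j\triangle Y_{j+1}$ has either zero or two elements. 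Two elements is not ``a canceling pair'' unless you are tracking parity --- and that is precisely the resource you no longer have in $\VZ$. There is no evident $\SigZB$ invariant of the sets $X_k$ that the $L_j$ steps preserve and that distinguishes $X_{m-1}$ from $X_m$.

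The paper bypasses this entirely with a much cleaner reduction: assuming $R$ and $B$ are disjoint, it \emph{splices} $R$ into $B$ to form a single closed curve $P$ (cut $B$ at the point between $p_1$ and $p_2$, insert the red path, and add a short detour so everything joins up). Then in column $m-1$ the edges $b_1$ and $r_1$ are adjacent horizontal edges of $P$ pointing in the same direction, directly contradicting the Alternation Theorem for $P$. No column-to-column invariant is needed at all.

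Your sketch for the Alternation Lemma itself is also far from a proof. The ``strip analysis'' you describe does not isolate a $\SigZB$-expressible induction hypothesis. The paper's argument is substantially more structured: it introduces segments that \emph{stick to} the line $x=m$ (those lying weakly to the left with both endpoints on $x=m$), proves a Main Lemma (by induction on $m$) that two disjoint sticking segments have non-alternating endpoint pairs, proves a purely combinatorial lemma about non-arc-crossing bijections between alternating sets (this is where the work that would otherwise need counting is done, via \SigZB-minimization), and then establishes alternation by \emph{downward} induction on $m$, using the minimal sticking segments at $x=m+1$ to control those at $x=m$. Your two-arc argument does not supply any of this machinery.
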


  (See Definition \ref{d:connect} to explain
the notion of points $p_1,p_2$ being on different sides of a curve.)

We use the fact that the edges $B$ can be directed
(i.e., from $q_i$ to $q_{i+1}$).  The Main Theorem follows easily from
the Edge Alternation Theorem \ref{t:all-alternate}, which states that the
horizontal edges on each column $m$ of a closed curve must
alternate between pointing right and pointing left.

\subsubsection{Alternating edges and proof of the Main Theorem}
\label{s:mainT}

We start by defining the notion of {\em alternating sets}, which
is fundamental to the proof of the Main Theorem for \VZ.  Two sets
$X$ and $Y$ of numbers are said to alternate if their elements are interleaved,
in the following sense.

\begin{definition}
\label{d:alternate}
Two disjoint sets $X, Y$ {\em alternate} if
between every two elements of $X$ there is an element of $Y$, and
between every two elements of $Y$ there is an element of $X$.
These conditions are defined by the following \SigZB\ formulas:
\begin{description}
\smallvspace
\item[(i)]
$\forall x_1, x_2 \in X(x_1 < x_2 \supset \exists y\in Y,\, x_1 < y < x_2)$,
\item[(ii)]
$\forall y_1, y_2 \in Y(y_1 < y_2 \supset \exists x\in X,\, y_1 < x < y_2)$
\end{description}
\end{definition}

The Main Theorem follows easily from the following result.

\begin{theorem}
[Edge Alternation Theorem]
\label{t:all-alternate}
(Provable in \VZ)
Let $P$ be a sequence of edges that form a closed curve.
For each column $m$, let $A_m$ be the set of $y$-coordinates of
left-pointing edges of $P$ on the column, and let $B_m$ be the set of
$y$-coordinates of right-pointing edges of $P$ on the column.
Then $A_m$ and $B_m$ alternate.
\end{theorem}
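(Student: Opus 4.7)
The plan is to derive the alternation in two stages. First I would prove a ``sequence alternation'' claim along the cyclic traversal of $P$, and then use simplicity of the curve (i.e.\ that all $q_i$ are distinct) to lift it to alternation in the $y$-coordinate.

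For the first stage, introduce the $\SigZB$ predicates $\mathrm{Right}(i)\equiv (x(q_i)\ge m+1)$ and $\mathrm{Left}(i)\equiv (x(q_i)\le m)$. Since consecutive vertices in $P$ differ in $x$-coordinate by at most one, the Right/Left state of $q_i$ can change only when the edge $\langle q_i,q_{i+1}\rangle$ is a horizontal edge on column $m$; moreover a right-pointing (resp.\ left-pointing) such edge corresponds exactly to a Left-to-Right (resp.\ Right-to-Left) flip. Consequently, listing the horizontal edges of $P$ on column $m$ in the cyclic order given by the sequence, their directions must alternate; this is provable in \VZ\ by \IND{\SigZB} on the sequence index $i$.

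For the second stage, let $A_1,\ldots,A_{2k}$ be the maximal sub-arcs of $P$ obtained by cutting $P$ at every column-$m$ horizontal edge. By sequence alternation, their interiors alternate between lying entirely in $\{x\ge m+1\}$ and entirely in $\{x\le m\}$. Distinctness of the $q_i$ makes the right-side sub-arcs pairwise vertex-disjoint, so the pairs of $y$-coordinates at which they meet the line $x=m+1$ form a non-crossing matching that pairs each right-pointing column-$m$ edge with the immediately following left-pointing one in cyclic sequence order; symmetrically on the left. This is again a $\SigZB$-property of the sequence.

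The main obstacle is the purely combinatorial final step: from these two non-crossing matchings on the $2k$ endpoints one must deduce alternation in $y$-order. I would formulate the conclusion in its direct $\SigZB$ form -- ``any two column-$m$ horizontal edges of $P$ that are adjacent in $y$-order have opposite directions'' -- and prove it by contradiction, avoiding any recursion on the nested-parenthesis structure (which would not obviously be $\SigZB$). Suppose two same-direction edges $e,e'$ on column $m$ with $y(e)<y(e')$ have no column-$m$ horizontal edge of $P$ with $y$-coordinate in $(y(e),y(e'))$; by reflecting if necessary, assume both point right. Following the two right-side sub-arcs emanating from the right endpoints of $e$ and $e'$, sequence alternation forces each sub-arc to terminate at a left-pointing column-$m$ edge whose $y$-coordinate lies outside $(y(e),y(e'))$. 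A short $\SigZB$ case analysis on whether each terminal edge sits below $y(e)$ or above $y(e')$, combined with the non-crossing property of the right-side arcs, forces the two sub-arcs to share an internal vertex, contradicting distinctness of the $q_i$. This refutation of the offending configuration, being a $\SigZB$ statement over $m$, $e$, $e'$, is provable directly in \VZ, completing the Edge Alternation Theorem.
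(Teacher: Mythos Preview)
Your first stage (sequence alternation along the traversal) is correct and corresponds to what the paper establishes implicitly when it identifies left-pointing column-$(m{-}1)$ edges with the starting edges of minimal segments sticking to $(x=m)$ and right-pointing ones with the ending edges.

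The real gap is the step ``Distinctness of the $q_i$ makes the right-side sub-arcs pairwise vertex-disjoint, \emph{so} the pairs of $y$-coordinates \ldots\ form a non-crossing matching''. Vertex-disjointness is indeed immediate from simplicity, but the ``so'' hides the whole difficulty: the claim that two vertex-disjoint grid paths lying in the half-plane $\{x\ge m+1\}$ with all four endpoints on the line $x=m+1$ must have non-alternating endpoint heights is itself a planarity statement of Jordan-curve type --- essentially an instance of the very st-connectivity principle you are proving. In the paper this is exactly the content of the Main Lemma (Lemma~\ref{t:main-lemma}), and it is \emph{not} treated as obvious: it is proved by induction on the column index, stripping one column from each of the two offending segments and appealing to the hypothesis at the previous vertical line. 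Without such an argument your non-crossing matchings are unjustified in \VZ, and everything downstream rests on them. There is also a secondary gap: even granting both non-crossing matchings, your ``short $\SigZB$ case analysis'' is optimistic --- configurations such as $y(f)<y(e)<y(e')<y(f')$, or both terminal edges nested on the same side, are consistent with non-crossing of the right-side arcs alone, and eliminating them requires the left-side matching together with the single-cycle condition in a way you have not spelled out. The paper handles this step with a separate combinatorial tool, the Alternation Lemma (Lemma~\ref{t:arc-cross}) on non-crossing bijections between alternating sets, and runs the whole argument as a \emph{downward} induction on $m$, so that full alternation at column $m$ (not merely non-crossing) is available as the induction hypothesis when treating column $m-1$.
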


This theorem will be proved in Subsection \ref{s:main-thm-proof},
after presenting necessary concepts and lemmas in Subsections \ref{s:alter} and \ref{s:main}.

\begin{proof}
[Proof of the Main Theorem \ref{t:JCT-seq} from the Edge Alternation Theorem]
The proof is by contradiction.
Assume that $R$ does not intersect $B$.
We construct a sequence of edges $P$ from $B$ and $R$ that form
a closed curve, but that violate the Edge Alternation Theorem.

Without loss of generality, assume that $p_1, p_2$ and $B$, $R$ are as in
Figure \ref{f:redpath}.  Also, suppose that the sequence $R$
starts from $p_1$ and ends in $p_2$.
We may assume that the edge $b_1$ is from right to left
(otherwise reverse the curve).
Assume that the point \tuple{x(p_1)+1, y(p_1)} is not on $B$ or $R$.
(This can be achieved by doubling the density of the grid.)

\begin{figure*}[htb]
\begin{center}
\begin{pspicture}(5,4.5)
\psline[linecolor=blue,linewidth=1.8pt]{-}(-0.5,2)(0,2)
\psline[linecolor=blue,linewidth=1.8pt]{<-}(0,2)(1,2)
\psline[linecolor=blue,linewidth=1.8pt]{<-}(1,2)(2,2)
\psline[linecolor=blue,linewidth=1.8pt]{<-}(3,2)(4,2)
\psline[linecolor=blue,linewidth=1.8pt]{->}(3,2)(3,1)
\rput(1.5,2.3){$b_{1}$}
\psline[linecolor=blue,linewidth=1.8pt]{<-}(4,2)(4,3)
\psline[linecolor=blue,linewidth=1.8pt]{<-}(4,3)(4,4)
\psline[linecolor=blue,linewidth=1.8pt]{<-}(4,4)(4,4.5)
\psline[linecolor=red,linestyle=dashed,linewidth=1.8pt]{<-}(0,1)(1.0,1)
\psline[linecolor=red,linestyle=dashed,linewidth=1.8pt]{<-}(1,1)(2.0,1)
\psline[linecolor=blue,linewidth=1.8pt]{<-}(2,1)(3.0,1)
\rput(2.3,1.3){$p_{1}$} \rput(1.5,1.2){$r_{1}$}
\psline[linecolor=red,linestyle=dashed,linewidth=1.8pt](0,1)(0,0.5)
\psline[linecolor=red,linestyle=dashed,linewidth=1.8pt]{->}(1,3)(2.0,3)
\rput(2.3,3.3){$p_{2}$} \rput(1.5,3.2){$r_{2}$}
\qdisk(2,1){2pt}\qdisk(2,3){2pt}
\psline[linecolor=red,linestyle=dashed,linewidth=1.8pt]{->}(2,3)(2,2)
\psline[linecolor=red,linestyle=dashed,linewidth=1.8pt]{->}(1,4.5)(1,4)
\psline[linecolor=red,linestyle=dashed,linewidth=1.8pt]{->}(1,4)(1,3)
\psline(0.0,0.5)(0.0,4.5)
\psline(3.0,0.5)(3.0,4.5)
\psline(4.0,0.5)(4.0,4.5)
\psline(-0.5,1.0)(4.5,1.0)
\psline(-0.5,2.0)(4.5,2.0)
\psline(-0.5,3.0)(4.5,3.0)
\psline(-0.5,4.0)(4.5,4.0)
\psline(1.0,0.5)(1.0,4.5) \rput(1.0,0.1){$m-1$}
\psline(2.0,0.5)(2.0,4.5) \rput(2.0,0.1){$m$}
\end{pspicture}
\caption{Merging the red (dashed) path and the blue (undashed) curve.}
\label{f:merge}
\end{center}
\end{figure*}

We merge $B$ and $R$ into a sequence of edges as in Figure \ref{f:merge}.
Let $P$ be the resulting sequence of edges.
Then $P$ is a closed curve.
However, the edges $r_1$ and $b_1$ have the same direction, and thus violate
the Edge Alternation Theorem.
\end{proof}

\subsubsection{Bijections between alternating sets}
\label{s:alter}

Suppose that $X$ and $Y$ alternate and $f: X \rightarrow Y$
is a bijection from $X$ to $Y$.
Let $x_1, x_2 \in X$, $x_1 < x_2$, and suppose that neither
$f(x_1)$ nor $f(x_2)$ lies between $x_1$ and $x_2$.
Since the open interval $(x_1, x_2)$ contains more elements of $Y$ than $X$,
it must contain an image $f(z)$ of some $z \in X$ where either
$z < x_1$ or $z > x_2$.

The above property can be formalized and proved in the theory $\VTCZ$,
where $f$ is given by its graph: a finite set of ordered pairs.
However, it is not provable in $\VZ$, because it implies the
surjective Pigeonhole Principle, which is not provable in \VZ\
\cite{Cook:Nguyen}.
Nevertheless it is provable in $\VZ$ under the assumption that
$f$ satisfies the condition that connecting each $x$ to its image
$f(x)$ by an arc above the line $\N$ does not create any ``crossings'',
i.e.
\begin{align}\label{e:crossing}
&  \mbox{the sets $\{z_1,f(z_1)\}$ and $\{z_2,f(z_2)\}$ are not alternating,} \\
&    \mbox{for all $z_1, z_2 \in X$, $z_1 \neq z_2$.} \nonumber
\end{align}
(See Figure \ref{f:alt-cross}).

\begin{figure}[htb]
\centering
\begin{pspicture}(10,1.2)
\psline{|->}(0,0.4)(8,0.4)
\pscurve{-}(1,0.4)(2.5,1.0)(4,0.4)
\pscurve{-}(2.8,0.4)(4.4,1)(6,0.4)
\rput(1,0.1){$z_1$}
\rput(4,0.1){$f(z_1)$}
\rput(2.8,0.1){$f(z_2)$}
\rput(6,0.1){$z_2$}
\end{pspicture}
\caption{$f$ violates (\ref{e:crossing})}
\label{f:alt-cross}
\end{figure}

We need the following result to prove the Edge Alternation Theorem.

\begin{lemma}[Alternation Lemma]
\label{t:arc-cross}
(Provable in \VZ)
Suppose that $X$ and $Y$ alternate and that $f$ (given by a finite set
of ordered pairs) is a bijection between
$X$ and $Y$ that satisfies \eqref{e:crossing}.
Let $x_1, x_2,  \in X$ be such that $x_1 < x_2$ and neither $f(x_1)$
nor $f(x_2)$ is in the interval $(x_1,x_2)$.  Then,
\begin{equation}\label{e:star}
 \exists z \in X, (z < x_1 \vee z > x_2) \wedge x_1<f(z)<x_2
\end{equation}
\end{lemma}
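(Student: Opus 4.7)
The plan is to prove the lemma by $\SigZB$-induction in $\VZ$ on the parameter $n = x_2 - x_1$, treating $X$, $Y$, and the graph of $f$ as fixed parameters coming from the outer universal quantifiers of the lemma. The induction formula is
\[
  \varphi(n) \equiv \forall x_1, x_2 \le b\,\bigl(x_1 < x_2 \le x_1 + n \wedge \mathrm{hyps}(x_1, x_2) \supset \mathrm{concl}(x_1, x_2)\bigr),
\]
where $b$ is a polynomial bound on the ambient coordinates and ``hyps'' abbreviates the remaining hypotheses of the lemma restricted to the pair $(x_1,x_2)$. This formula is bounded and hence $\SigZB$, so $\SigZB$-IND is legal in $\VZ$. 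The base case $n \le 1$ is vacuous, because alternation condition (i) of Definition \ref{d:alternate} precludes two $X$-elements whose open gap contains no integer.

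For the inductive step, assume $\varphi(n)$ and consider $x_1 < x_2$ in $X$ with $x_2 - x_1 = n+1$ satisfying the hypotheses. Using $\SigZB$-minimization pick $y_1$ to be the least element of $Y \cap (x_1, x_2)$, which exists by alternation (i) applied to $x_1, x_2 \in X$. Let $z_1$ be the unique element of $X$ with $\tuple{z_1, y_1} \in f$. If $z_1 < x_1$ or $z_1 > x_2$, then $z_1$ is the sought witness and we are done. Otherwise $z_1 \in (x_1, x_2)$; minimality of $y_1$ combined with alternation (i) applied to $x_1, z_1 \in X$ rules out $z_1 < y_1$ (any $Y$-element strictly between $x_1$ and $z_1$ would be smaller than $y_1$), so $z_1 \in (y_1, x_2)$.

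Now apply the inductive hypothesis to the pair $(z_1, x_2)$: its width $x_2 - z_1$ is at most $n$, both endpoints lie in $X$, $f(z_1) = y_1 < z_1$ lies outside $(z_1, x_2)$, and $f(x_2) \notin (x_1, x_2) \supseteq (z_1, x_2)$ also lies outside. The IH produces $z \in X$ with $(z < z_1 \vee z > x_2) \wedge z_1 < f(z) < x_2$. The case $z > x_2$ already gives the conclusion for $(x_1, x_2)$, since $f(z) \in (z_1, x_2) \subset (x_1, x_2)$. If instead $z < z_1$, it remains only to rule out $z \in (x_1, z_1)$: supposing for contradiction that $x_1 < z < z_1$, the minimality argument once more forces $z > y_1$, so $z \in (y_1, z_1)$, and then the four relevant points sort as $y_1 < z < z_1 < f(z)$, which means the pairs $\{z_1, f(z_1)\} = \{z_1, y_1\}$ and $\{z, f(z)\}$ alternate in the sense of Definition \ref{d:alternate}, directly contradicting the no-crossing hypothesis \eqref{e:crossing}. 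Hence $z < x_1$, completing the inductive step.

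The main obstacle is this final application of the no-crossing hypothesis: without it the lemma would follow from the inequality $|Y \cap (x_1, x_2)| > |X \cap (x_1, x_2)|$ forced by alternation, but such a counting argument would invoke the surjective pigeonhole principle, which the paper's remark preceding the lemma observes is not available in $\VZ$. A secondary care point is the formalization: one must verify that $y_1$, $z_1$, and the IH-supplied $z$ all admit an explicit polynomial bound in $|X| + |Y|$, so that $\varphi(n)$ is a bona fide $\SigZB$ induction formula with $X$, $Y$, and (the graph of) $f$ as parameters.
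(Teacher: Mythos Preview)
Your argument is correct and follows essentially the same strategy as the paper: both proceed by induction/minimization on the width $x_2 - x_1$, choose an extremal element of $Y$ in the interval (you take the least one above $x_1$, the paper the greatest one below $x_2$), locate its $f$-preimage, and invoke the non-crossing condition \eqref{e:crossing} to pass to a strictly shorter instance. The paper phrases this as a minimal-counterexample argument and shows the shorter pair is again a counterexample, whereas you carry a witness upward from the IH; these are equivalent formulations of the same idea. One tiny omission: after ruling out $z \in (x_1, z_1)$ you conclude $z < x_1$, but you should also dispose of $z = x_1$, which is immediate since then $f(x_1) = f(z) \in (z_1, x_2) \subset (x_1, x_2)$ contradicts the hypothesis on $f(x_1)$.
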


\begin{proof}
We prove by contradiction, using the number minimization principle.
Let $x_1, x_2$ be a counter example with the least difference $x_2 - x_1$.

Let $y_1 = \max(\{y \in Y: y < x_2\})$.
We have $x_1 < y_1 < x_2$.
Let $x_2'$ be the pre-image of $y_1$: $f(x_2') = y_1$.
By our assumption that (\ref{e:star}) is false, $x_1 < x_2' < x_2$.
In addition, since $y_1 = \max(\{y \in Y: y < x_2\})$ and $X$, $Y$ alternate,
we have
$x_1 < x_2' < y_1$.
(See Figure \ref{f:alter-lemma}.)


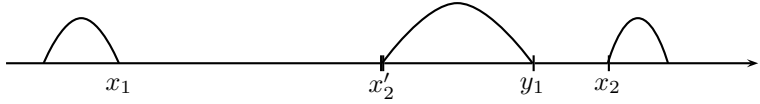
\begin{figure*}[htb]
\begin{pspicture}(10,1.2)
\psline{-|}(0,0.4)(5,0.4)
\psline{|-}(5,0.4)(7,0.4)
\psline{|-}(7,0.4)(8,0.4)
\psline{|->}(8,0.4)(10,0.4)
\pscurve{-}(0.5,0.4)(1.0,1.0)(1.5,0.4)
\pscurve{-}(8,0.4)(8.4,1.0)(8.8,0.4)
\pscurve{-}(5,0.4)(6,1.2)(7,0.4)
\rput(1.5,0.1){$x_1$}
\rput(5,0.1){$x_2'$}
\rput(7,0.1){$y_1$}
\rput(8,0.1){$x_2$}
\end{pspicture}
\caption{$f(x_1), f(x_2) \not\in (x_1,x_2)$, and $f(x_2') = y_1$.}
\label{f:alter-lemma}
\end{figure*}

Now by \eqref{e:crossing}, for all $z \in X$, $x_2' < z < y_1$ implies that
$x_2' < f(z) < y_1$.
Hence the pair $x_1, x_2'$ is another counter example, and
$x_2' - x_1 < x_2 - x_1$, contradicts our choice of $x_1, x_2$.
\end{proof}

\subsubsection{Alternating endpoints of curve segments}
\label{s:main}


For the remainder of Section \ref{s:input-seq},
$P$ denotes a sequence of edges
$$\tuple{p_0, p_1}, \tuple{p_1,p_2}, \ldots,
\tuple{p_{t-2}, p_{t-1}}, \tuple{p_{t-1},p_0}$$
that form a single closed curve (i.e, $t \ge 4$ and $p_0, \ldots, p_{t-1}$
are distinct).

For convenience, we assume that $P$ has a point on the first
vertical line $(x = 0)$ and a point on the last vertical line $(x=n)$.
To avoid wrapping around the last index, we pick some vertical edge on
the line $(x=n)$ and define $p_0$ to be the forward end of this edge.  In other
words, the edge $\tuple{p_{t-1},p_0}$ lies on the line $(x=n)$. 

It is easy to prove in $\VZ$ that for every $m$, $0\le m \le n$, $P$ must
have a point on the vertical line $(x=m)$.  For otherwise there is
a largest $m < n$ such that the line $(x=m)$ has no point on $P$,
and we obtain a contradiction by considering the edge $\tuple{p_{i-1},p_i}$,
where $i$ is the smallest number such that $x(p_i)\le m$.

For $a < b < t$, let $P_{[a,b]}$ be the oriented segment of $P$ that
contains the points $p_a,p_{a+1},\ldots, p_b$,
and let $P_{[a,a]} = \{p_a\}$.
We are interested in the segments $P_{[a,b]}$ where $x(p_a) = x(p_b)$

The next Definition is useful in identifying segments of $P$
that are ``examined'' as we scan the curve from left to right.
See Figure \ref{f:example} for examples.

\begin{definition}\label{d:segment-type}
A segment $P_{[a,b]}$ is said to {\em stick to} the vertical line $(x = m)$
if $x(p_a) = x(p_b) = m$, and for $a < c < b$, $x(p_c) \le m$.
A segment $P_{[a,b]}$ that sticks to $(x=m)$ is said to be {\em minimal}
if $b - a > 1$, and for $a < c < b$ we have $x(p_c) < m$.
Finally, $P_{[a,b]}$ is said to be {\em entirely on $(x = m)$} if $x(p_c) = m$,
for $a \le c \le b$.
\end{definition}

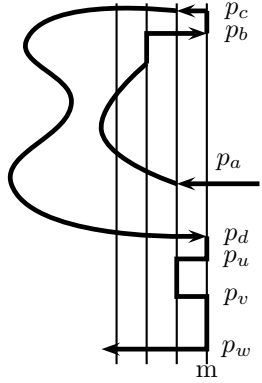
\begin{figure*}[htb]
\begin{center}
\begin{pspicture}(3.5,5.0)
\pscurve[linewidth=1.8pt]{-}(2.4,4.8)(0.4,4.4)(1.0,3.6)(0.2,2.5)(2.4,1.8) 
\psline[linewidth=1.8pt]{<-}(2.4,4.8)(2.8,4.8) \rput(3.2,4.8){$p_{c}$} 
\psline[linewidth=1.8pt]{-}(2.8,4.8)(2.8,4.5) \rput(3.2,4.5){$p_b$} 
\psline[linewidth=1.8pt]{<-}(2.8,4.5)(2.0,4.5)(2.0,4.1)
\pscurve[linewidth=1.8pt](2.0,4.1)(1.4, 3.2)(2.4,2.5) 
\psline[linewidth=1.8pt]{<-}(2.4,2.5)(3.5,2.5) \rput(3.1,2.8){$p_a$}
\psline[linewidth=1.8pt]{->}(2.4,1.8)(2.8,1.8)
\psline[linewidth=1.8pt]{->}(2.8,1.8)(2.8,1.5)(2.4,1.5)(2.4,1.0)(2.8,1.0)(2.8,0.3)(1.4,0.3)
\rput(3.2,1.8){$p_d$} \rput(3.2,1.5){$p_u$} \rput(3.2,1.0){$p_v$} \rput(3.2,0.3){$p_w$}
\psline(2.8,0.1)(2.8,4.9)
\psline(2.4,0.1)(2.4,4.9)
\psline(2.0,0.1)(2.0,4.9)
\psline(1.6,0.1)(1.6,4.9)
\rput(2.8,0.0){m}
\end{pspicture}
\caption{The segments $P_{[a,b]}, P_{[a,c]}, \ldots, P_{[u,w]}, P_{[v,w]}$ all stick to
the vertical line $(x = m)$.
Among these, $P_{[a,b]}, P_{[c,d]}$ and $P_{[u,v]}$ are minimal,
while $P_{[b,c]}, P_{[d,u]}$ and $P_{[v,w]}$ are entirely on $(x = m)$.}
\label{f:example}
\end{center}
\end{figure*}

Notice that minimal segments that stick to a vertical line $(x = m)$
are disjoint.
Also, if $P_{[a,b]}$ is a minimal segment that sticks to $(x = m)$,
then the first and the last edges of the segments must be horizontal
edges in column $m-1$, 
i.e., $y(p_a) = y(p_{a+1})$ and $y(p_b) = y(p_{b-1})$.
In fact, the left-pointing horizontal edges in column $m-1$ are precisely
those of the form $\tuple{p_a,p_{a+1}}$ for some minimal segment $P_{[a,b]}$
that sticks to the vertical line $(x=m)$, and the right-pointing horizontal
edges in column $m-1$ are precisely those of the form $\tuple{p_{b-1},p_b}$
for some such minimal segment $P_{[a,b]}$.

These facts are provable in \VZ, and show that the Edge Alternation Theorem
\ref{t:all-alternate} is equivalent to the following lemma
(see Figure \ref{f:alternating}).  Here (and elsewhere)
the assertion that two sets
of points on a vertical line alternate means that the two corresponding
sets of $y$-coordinates alternate.

\begin{lemma}
[Edge Alternation Lemma]
\label{t:all-alt-lemma}
(Provable in \VZ)
Let $P_{[a_1,b_1]}$, $\ldots$, $P_{[a_k,b_k]}$
be all minimal segments that stick to the vertical line $(x = m)$.
Then the sets $\{p_{a_1}, \ldots, p_{a_k}\}$
and $\{p_{b_1},\ldots, p_{b_k}\}$ alternate.
\end{lemma}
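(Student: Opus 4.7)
The plan is to prove the Edge Alternation Lemma in $\VZ$ by constructing a natural non-crossing bijection between the starts and ends of minimal segments, then combining this with a symmetric analysis on the right side of the column $x=m$ to force alternation. First, using $\SigZB$-comprehension I would define a bijection $\sigma$ from $\{y(p_{a_i})\}$ to $\{y(p_{b_i})\}$ by $\sigma(y(p_{a_i})) = y(p_{b_i})$, which tracks each minimal segment from where $P$ exits column $m$ going left to where it re-enters from the left.

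Second, I would establish that $\sigma$ satisfies the non-crossing condition \eqref{e:crossing}: for $i \neq j$ the pairs $\{y(p_{a_i}), y(p_{b_i})\}$ and $\{y(p_{a_j}), y(p_{b_j})\}$ do not alternate in $y$-order. Two distinct minimal segments are disjoint arcs in the closed half-plane $x \le m$ whose interiors lie strictly in $x < m$, so alternating endpoints would force the two arcs to share an interior vertex, violating the simplicity of $P$. Formalizing this discrete-planar fact in $\VZ$, without invoking the very JCT-style reasoning we are building toward, is the main technical hurdle; my approach is $\SigZB$-minimization — assume a minimum-length ``crossing'' counterexample between two segments, then show that a shortcut along $x = m$ produces a strictly shorter one, contradicting minimality. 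A symmetric argument for the ``return pieces'' of $P$ from $p_{b_i}$ to $p_{a_{i+1}}$ (cyclically along $P$) in the closed half-plane $x \ge m$ yields a second non-crossing matching.

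Third, from these two non-crossing matchings whose edge-union is precisely the single closed curve $P$, I would derive alternation. Suppose for contradiction that two starts $y(p_{a_i}) < y(p_{a_j})$ are consecutive in $y$-order with no end between them, chosen so as to minimize $y(p_{a_j}) - y(p_{a_i})$ (using $\VZ$'s $\SigZB$-minimization). The non-crossing property of the left matching pins the two associated minimal segments into a nested configuration, but the closing-up of $P$ through the right half-plane matching would then force a crossing there, contradicting Step~2. The Alternation Lemma (Lemma~\ref{t:arc-cross}) enters as the technical engine that, within this minimization argument, pinpoints the forced crossing. The principal obstacle throughout is Step 2: one needs a purely combinatorial substitute for the topological fact ``disjoint arcs in a half-plane cannot interleave on the boundary,'' keeping every concept and every induction instance inside $\SigZB$ so that the schemes available in \VZ\ apply.
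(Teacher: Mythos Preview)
Your approach differs substantially from the paper's, and Step~3 contains a real gap.

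The paper proves the lemma by \emph{downward induction on $m$}. The base case $m=n$ is immediate from Lemma~\ref{t:alternate}, since the whole curve $P_{[0,t-1]}$ sticks to $(x=n)$. For the step, the induction hypothesis gives alternation of the endpoints of the minimal segments sticking to $(x=m+1)$; stripping one edge from each end yields segments sticking to $(x=m)$ with the same $y$-coordinates, so those endpoint sets $A,B$ still alternate. One then assumes two same-direction edges on column $m-1$ with nothing between them, locates the containing segments $P_{[a_i,b_i]}$ and $P_{[a_j,b_j]}$, and applies the Main Lemma together with the Alternation Lemma---applied to the \emph{already alternating} sets $A,B$ coming from the induction hypothesis---to reach a contradiction via a two-case analysis.

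Your plan instead tries to prove alternation at a fixed $m$ directly, by pairing a left-side non-crossing matching (your $\sigma$, which is indeed the content of the Main Lemma) with a right-side non-crossing matching on the return pieces. Two problems arise.

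First, your Step~2 sketch for non-crossing (``assume a minimum-length crossing and shortcut along $x=m$'') does not match how the Main Lemma is actually established: that proof is by induction on $m$, reducing a minimal counterexample at $(x=m+1)$ to one at $(x=m)$. A pure length-minimization at a fixed column stalls once both offending segments are already minimal---their interiors lie strictly in $x<m$, so there is no ``shortcut along $x=m$'' left to take. Your right-side analog would likewise need a (downward) column induction, which you do not mention.

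Second, and more seriously, Step~3 invokes the Alternation Lemma to ``pinpoint the forced crossing,'' but that lemma has as a \emph{hypothesis} that $X$ and $Y$ alternate---precisely what you are trying to establish. In the paper's argument that hypothesis is supplied by the downward induction: alternation at column $m+1$ is already known, so the Alternation Lemma legitimately applies to $A$ and $B$. Without that inductive input you have no alternating pair to feed into the lemma. What you would actually need in Step~3 is a separate $\VZ$-proof that two non-crossing $A$--$B$ matchings whose union is a single cycle force $A$ and $B$ to alternate; this is true, but it is essentially a restatement of the Edge Alternation Theorem itself, and you have not given an independent argument for it.
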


Note that although in \VZ\ we can define the set of all segments
$P_{[a_i,b_i]}$ in the lemma above,
we are not able to define $k$, the total number of such segments.
Thus the index $k$ is used only for readability.

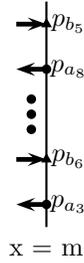
\begin{figure}[htb]
\begin{center}
\begin{pspicture}(2.5,3.8)
\psline[linewidth=1.8pt]{->}(1.4,0.8)(1,0.8) \rput(1.7,0.8){$p_{a_3}$}
\psline[linewidth=1.8pt]{->}(1,1.4)(1.4,1.4) \rput(1.7,1.4){$p_{b_6}$}
\psline[linewidth=1.8pt]{->}(1.4,2.6)(1,2.6) \rput(1.7,2.6){$p_{a_8}$}
\psline[linewidth=1.8pt]{->}(1,3.2)(1.4,3.2) \rput(1.7,3.2){$p_{b_5}$}
\psline(1.4,0.5)(1.4,3.5)
\rput(1.4,0.2){x = m}
\psdots(1.2,1.8)(1.2,2)(1.2,2.2)
\psdots[dotstyle=*](1.4,0.8)(1.4,2.6)
\psdots[dotstyle=triangle*](1.4,1.4)(1.4,3.2)
\end{pspicture}
\end{center}
\caption{The end-edges of minimal segments that stick to $(x = m)$ alternate.}
\label{f:alternating}
\end{figure}

Before proving the Edge Alternation Lemma we give two important lemmas
needed for the proof.
The first of these states that the endpoints of two non-overlapping segments
of $P$ that stick to
the same vertical line do not alternate on the vertical line.

\begin{lemma}[Main Lemma]
\label{t:main-lemma}
(Provable in \VZ)
Suppose that $a<b<c<d$
and that the segments $P_{[a,b]}$ and $P_{[c,d]}$ both stick to $(x = m)$.
Then the sets $\{y(p_a),y(p_b)\}$ and $\{y(p_c),y(p_d)\}$ do not
alternate.
\end{lemma}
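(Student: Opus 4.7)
The plan is to argue by contradiction using $\SigZB$-minimization, which is available in $\VZ$ since the theory proves $\SigZB$-induction (and induction implies the minimization scheme for the same class of formulas). I would suppose a counterexample $(a, b, c, d, m)$ to the lemma exists, and pick one minimizing the $\SigZB$-definable measure $\mu = m + (b - a) + (d - c)$. By exchanging the roles of the two segments if necessary, I can assume the alternation order $y(p_a) < y(p_c) < y(p_b) < y(p_d)$. The argument then splits into two cases based on the structure of the segments.

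Case I (splitting): suppose that either $P_{[a,b]}$ or $P_{[c,d]}$ has an interior index $i$ with $x(p_i) = m$. If $y(p_i) \in \{y(p_a), y(p_b), y(p_c), y(p_d)\}$, then $p_i$ would coincide as a grid point with one of $p_a, p_b, p_c, p_d$, violating the distinctness of the points on the simple closed curve $P$ (since $i$ lies strictly inside the corresponding index interval). Otherwise $y(p_i)$ lies strictly in one of the five open intervals determined by the four ordered values. A short case analysis shows that replacing the containing segment by either its left half $P_{[\,\cdot\,,i]}$ or its right half $P_{[i,\,\cdot\,]}$—each of which still sticks to $(x = m)$ since the original did—preserves the alternation condition with the other segment and strictly decreases $\mu$: concretely, use $P_{[i,b]}$ when $y(p_i) < y(p_c)$ or $y(p_i) > y(p_d)$, and $P_{[a,i]}$ when $y(p_c) < y(p_i) < y(p_d)$ (with a symmetric choice when the split is applied to $P_{[c,d]}$). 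This contradicts minimality.

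Case II (shrinking): suppose neither segment has any interior index on the line $(x = m)$, so that both $P_{[a,b]}$ and $P_{[c,d]}$ are minimal segments sticking to $(x = m)$. Then $b - a \ge 3$ (since $b - a = 2$ would force $p_a = p_b$, violating distinctness), and likewise $d - c \ge 3$. Consequently the first and last edges of each segment are horizontal in column $m - 1$, so $p_{a+1}, p_{b-1}, p_{c+1}, p_{d-1}$ all lie on $(x = m - 1)$ with $y$-coordinates $y(p_a), y(p_b), y(p_c), y(p_d)$ respectively. The subsegments $P_{[a+1, b-1]}$ and $P_{[c+1, d-1]}$ then both stick to $(x = m - 1)$ and inherit the alternating $y$-coordinate data, producing a counterexample with $\mu$ decreased by $5$, again contradicting minimality.

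The main obstacle will be verifying that every predicate and measure used in the argument (``$P_{[a,b]}$ sticks to $(x = m)$'', ``some interior index has $x$-coordinate $m$'', and the measure $\mu$) is $\SigZB$-definable over the input sequence $P$, so that $\SigZB$-minimization in $\VZ$ is legal. A secondary delicate point is the exhaustive case analysis in Case~I: for each of the five positions of $y(p_i)$ relative to $y(p_a), y(p_c), y(p_b), y(p_d)$ one must confirm which of the two subsegments yields the smaller alternating counterexample. Confirming in Case~II that the shrunken subsegments retain stickiness at the new column $(x = m - 1)$ is routine but must be stated carefully, since it underpins the reduction.
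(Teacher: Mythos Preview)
Your proposal is correct and essentially matches the paper's proof: the paper argues by induction on $m$ and, at each $m$, takes a counterexample of minimal total length $(b-a)+(d-c)$, observes that both segments must then be minimal (your Case~I spelled out), and shrinks to $P_{[a+1,b-1]},P_{[c+1,d-1]}$ sticking to $(x=m-1)$ (your Case~II). Combining the two nested minimizations into a single one on $\mu=m+(b-a)+(d-c)$ is a cosmetic repackaging of the same argument.
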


\begin{proof}
We argue in \VZ\ using induction on $m$.
The base case ($m = 0$) is straightforward:
both $P_{[a,b]}$ and $P_{[c,d]}$ must be entirely on $(x = 0)$.
For the induction step, suppose that the lemma is true for some $m \ge 0$.
We prove it for $m+1$ by contradiction.

Assume that there are disjoint segments $P_{[a,b]}$ and $P_{[c,d]}$
sticking to $(x = m+1)$ that violate the lemma.
Take such segments with smallest total length $(b-a) + (d-c)$.
It is easy to check that
both $P_{[a,b]}$ and $P_{[c,d]}$ must be minimal segments.

Now the segments $P_{[a+1,b-1]}$ and $P_{[c+1,d-1]}$ stick to the vertical
line $(x = m)$, and their endpoints have the same $y$-coordinates as
the endpoints of $P_{[a,b]}$ and $P_{[c,d]}$.  Hence we get a contradiction
from the induction hypothesis.
\end{proof}

From the Main Lemma we can prove an important special case of the
Edge Alternation Lemma.

\begin{lemma}
\label{t:alternate}
(Provable in \VZ)
Let $P_{[a,b]}$ be a segment that sticks to $(x = m)$,
and let $P_{[a_1,b_1]}, \ldots, P_{[a_k,b_k]}$ be all minimal subsegments
of $P_{[a,b]}$ that stick to $(x = m)$,
where $a \le a_1 < b_1 <  \ldots < a_{k} < b_k \le b$.
Then the sets $\{p_{a_1}, \ldots, p_{a_k}\}$
and $\{p_{b_1}, \ldots, p_{b_k}\}$ alternate.
\end{lemma}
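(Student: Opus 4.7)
The plan is to prove the lemma by contradiction, combining the Main Lemma \ref{t:main-lemma} with the simplicity of the curve $P$. For brevity write $\alpha_i = y(p_{a_i})$ and $\beta_i = y(p_{b_i})$, and let $f \colon A \to B$ be the bijection $\alpha_i \mapsto \beta_i$. Applying Main Lemma pairwise to distinct minimal subsegments, $f$ satisfies the non-crossing condition \eqref{e:crossing}. Suppose for contradiction that $A$ and $B$ do not alternate. Reversing the orientation of $P_{[a,b]}$ swaps left-pointing with right-pointing horizontal edges in column $m-1$, hence swaps $A$ with $B$; so WLOG there exist two $y$-consecutive elements of $A \cup B$ both in $A$. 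Call them $\alpha = \alpha_i < \alpha' = \alpha_j$, with partners $\beta = \beta_i$ and $\beta' = \beta_j$. By the choice of $\alpha, \alpha'$, no element of $A \cup B$ lies in the open interval $(\alpha, \alpha')$; in particular each of $\beta, \beta'$ is either $\le \alpha$ or $\ge \alpha'$.

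From here the argument is a case analysis on (i) the four sign patterns of $\beta, \beta'$ relative to $\alpha, \alpha'$, and (ii) the curve order of $P_{[a_i, b_i]}$ and $P_{[a_j, b_j]}$. In each case the Main Lemma first eliminates the sub-configurations in which $\{\alpha, \beta\}$ and $\{\alpha', \beta'\}$ would themselves alternate, narrowing the layout to one of a few specific possibilities. Then I would trace the portion of $P$ between the two subsegments in curve order: between consecutive minimal subsegments the curve traverses a vertical sub-path on $(x = m)$, and by the choice of $\alpha, \alpha'$ no intermediate $\alpha_\ell$ or $\beta_\ell$ lies in $(\alpha, \alpha')$. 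If some vertical sub-path crosses the interval $(\alpha, \alpha')$, its interior contains $\alpha$ or $\alpha'$, which forces either $p_{a_i}$ or $p_{a_j}$ to be visited at two distinct curve indices, contradicting simplicity of $P$. In the remaining non-crossing cases the Main Lemma further constrains the intermediate pairs to nest inside pair $i$ or pair $j$, and a similar simplicity contradiction is forced on an intermediate vertical sub-path whose interior must contain one of $\beta, \beta'$ (or the partner of an intermediate pair).

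The main obstacle is the bookkeeping in the case analysis: the four sign patterns times two curve orders produce eight sub-cases, each of which must be closed by Main Lemma or a vertical-interior simplicity argument on $(x = m)$. All notions involved — the minimal subsegments, the sets $A$ and $B$, $y$-consecutiveness in $A \cup B$, and the vertical sub-paths between consecutive dips — are $\SigZB$-definable from $P$ and $m$, so the proof formalizes in $\VZ$ without ever needing to refer to the total count $k$ of minimal subsegments.
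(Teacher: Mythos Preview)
Your plan is plausible in spirit but is both unexecuted and considerably more complicated than what is needed. You set up an eight-case analysis (four sign patterns of $\beta,\beta'$ times two curve orders) and then defer the actual work, admitting that ``the main obstacle is the bookkeeping''. The tracing step you describe---walking along the vertical sub-paths between $P_{[a_i,b_i]}$ and $P_{[a_j,b_j]}$ and arguing that some such sub-path must hit $p_{a_i}$ or $p_{a_j}$---is really an induction along the curve indices between $b_i$ and $a_j$, and you never state the inductive invariant or verify that each of the remaining cases closes. As written this is a proof outline, not a proof.

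The paper's argument avoids all of this. It does \emph{not} take $\alpha_i,\alpha_j$ to be $y$-consecutive in $A\cup B$; it simply takes any $i\neq j$ with $y(p_{a_i})<y(p_{a_j})$, assumes $i<j$ by symmetry, and looks only at the single vertical segment $P_{[b_{j-1},a_j]}$ that immediately precedes the $j$-th minimal subsegment. There are then just two cases. If $y(p_{b_{j-1}})<y(p_{a_j})$, simplicity of $P$ forces $y(p_{a_i})<y(p_{b_{j-1}})$ (otherwise $p_{a_i}$ would lie on that vertical segment at a second curve index), and $\ell=j-1$ witnesses a $p_{b_\ell}$ between $p_{a_i}$ and $p_{a_j}$. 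If $y(p_{b_{j-1}})>y(p_{a_j})$, one application of the Main Lemma to $P_{[a_i,b_{j-1}]}$ and $P_{[a_j,b_j]}$ gives $y(p_{a_i})<y(p_{b_j})<y(p_{b_{j-1}})$, and then simplicity (again using that $P_{[b_{j-1},a_j]}$ is vertical) forces $y(p_{b_j})<y(p_{a_j})$, so $\ell=j$ works. The key insight you are missing is that looking at the immediate curve-predecessor $b_{j-1}$ of $a_j$ collapses the whole analysis to two cases and a single use of the Main Lemma; there is no need for the non-crossing bijection $f$, the $y$-consecutiveness hypothesis, or any multi-step tracing.
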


\begin{proof}
We show that between any two $p_{a_i}$'s there is a $p_{b_j}$.
The reverse condition is proved similarly.
Thus let $i \neq j$ be such that $y(p_{a_i}) < y(p_{a_j})$.
We show that there is some $\ell$ such that
$y(p_{a_i}) < y(p_{b_\ell}) < y(p_{a_j})$.

\begin{figure*}[htb]
\begin{center}
\begin{pspicture}(3.0,3.8)
\psline[linewidth=1.8pt]{->}(2.4,1.0)(2,1.0) \rput(2.8,1.0){$p_{a_i}$}
\psline[linewidth=1.8pt]{->}(2.4,2.2)(2,2.2) \rput(2.8,2.2){$p_{a_j}$}
\psline[linewidth=1.8pt]{<-}(2.4,2.8)(2,2.8) \rput(2.9,2.8){$p_{b_{j-1}}$}
\psline[linewidth=1.8pt]{-}(2.4,2.2)(2.4,2.8)
\pscurve[linewidth=1.8pt]{-}(2,1.0)(0.2,1.5)(1.4,2.0)(0.4,2.5)(2,2.8)
\psline(2.4,0.5)(2.4,3.5)
\rput(2.4,0.2){x = m}
\end{pspicture}
\end{center}
\caption{Proof of Lemma \ref{t:alternate}}
\label{f:alter-proof}
\end{figure*}

Consider the case where $i < j$ (the other case is similar).
Then the segment $P_{[b_{j-1},a_j]}$ is entirely on $(x = m)$.
Now if $y(p_{b_{j-1}}) < y(p_{a_j})$, then $y(p_{a_i}) < y(p_{b_{j-1}})$,
and we are done.  Thus, suppose that
$y(p_{b_{j-1}}) > y(p_{a_j})$ (see Figure \ref{f:alter-proof}).

From the Main Lemma for the segments $P_{[a_i,b_{j-1}]}$ and $P_{[a_j,b_j]}$
it follows that $y(p_{a_i}) < y(p_{b_j}) < y(p_{b_{j-1}})$.
Since $P_{[b_{j-1},a_j]}$ is entirely on $(x = m)$, it must be the case that
$y(p_{a_i}) < y(p_{b_j}) < y(p_{a_j})$.
\end{proof}

\subsubsection{Proof of the Edge Alternation Theorem}
\label{s:main-thm-proof}

To prove Theorem \ref{t:all-alternate}
it suffices to prove the Edge Alternation Lemma \ref{t:all-alt-lemma}.
The proof relies on Lemma \ref{t:alternate},
the Main Lemma, and the Alternation Lemma \ref{t:arc-cross}.
\begin{proof}[Proof of Lemma \ref{t:all-alt-lemma}]
We argue in \VZ\ and use downward induction on $m$.
The base case, $m = n$, follows from Lemma \ref{t:alternate},
where the segment $P_{[a,b]}$ has $a=0$ and $b=t-1$.  (Recall
our numbering convention that the edge $\tuple{p_{t-1},p_0}$ lies
on the vertical line $(x=n)$.)

For the induction step, suppose that the conclusion is true for $m + 1$,
we prove it for $m$ by contradiction.

Let $\{P_{[a'_1,b'_1]}, \ldots, P_{[a'_k,b'_k]}\}$ be the definable set of all minimal segments that
stick to the line $(x = m+1)$.
($k$ is not definable in \VZ, we use it only for readability.)

\Notation
Let $a_\ell = (a'_\ell + 1)$, $b_\ell = (b'_\ell - 1)$
and $A = \{y(p_{a_\ell})\}$, $B = \{y(p_{b_\ell})\}$.

Then, since
$$
y(p_{a_\ell}) = y(p_{a'_\ell})\qquad
\mbox{and} \qquad y(p_{b_\ell}) = y(p_{b'_\ell}),
$$
it follows from the induction hypothesis that $A$ and $B$ alternate.
(Note that each $P_{[a_\ell,b_\ell]}$ sticks to $(x=m)$, but might not be minimal.)

Now suppose that there are horizontal $P$-edges $e_1$ and $e_2$ on
column $m-1$ that violate the lemma, with $y(e_1)< y(e_2)$. 
Thus both $e_1$ and $e_2$ point in the same direction, and there is
no horizontal $P$-edge $e$ on column $(m-1)$ with
$y(e_1) < y(e) < y(p_2)$.  We may assume that both $e_1$ and $e_2$
point to the left.  The case in which they both point to the right
can be argued by symmetry (or we could strengthen the induction
hypothesis to apply to both of the curves $P$ and the reverse of $P$).

Let the right endpoints of $e_1$ and $e_2$ be $p_c$ and $p_d$, respectively.
Thus $x(p_c)=x(p_d)=m$ and $y(p_c)<y(p_d)$.

Let $P_{[a_i,b_i]}$ be the segment of $P$ containing $p_c$, and let
$P_{[a_j,b_j]}$ be the segment of $P$ containing $p_d$.  Note that the
segments $P_{[a_i,b_i]}$ and $P_{[a_j,b_j]}$ stick to $(x=m)$, but they
are not necessarily minimal.  It follows from  Lemma \ref{t:alternate}
that $i\ne j$.

We may assume that $p_{a_j}$ lies above $p_c$.  This is because
if $p_{a_j}$ lies below $p_c$, then we claim that $p_{a_i}$
lies below $p_d$ (since otherwise the segments $P_{[a_i,c]}$
and $P_{[a_j,d]}$ would violate the Main Lemma).
Thus the case $p_{a_j}$ lies below $p_c$ would follow by the case
we consider, by interchanging the roles of $a_i, c$ with $a_j,d$,
and inverting the graph.

\begin{figure*}[htb]
\begin{center}
\begin{pspicture}(3.9,6.8)
\psline[linewidth=1.8pt]{->}(2.4,1.0)(2.8,1.0) \rput(2.7,0.8){$p_{b_j}$}
\psline[linewidth=1.8pt]{->}(2.4,1.8)(2.8,1.8) \rput(2.7,1.6){$p_{b_i}$}
\psline[linewidth=1.8pt]{<-}(2.4,2.6)(2.8,2.6) \rput(2.7,2.4){$p_{a_i}$}
\rput(3.4,2.6){$x_1$}
\rput(3.4,5.4){$x_2$}
\psline[linewidth=1.8pt]{<-}(2.0,3.6)(2.4,3.6) \rput(2.7,3.6){$p_{c}$}
\psline[linewidth=1.8pt]{<-}(2.0,4.4)(2.4,4.4) \rput(2.7,4.4){$p_{d}$}
\psline[linewidth=1.8pt]{<-}(2.4,5.4)(2.8,5.4) \rput(2.7,5.6){$p_{a_j}$}
\psline[linewidth=1.8pt]{->}(2.4,6.2)(2.8,6.2) \rput(2.7,6.4){$p_{b_j}$}
\psline[linewidth=1.8pt](2.4,4.4)(2.4,4.8)
\pscurve[linewidth=1.8pt]{-}(2.4,4.8)(1.8,5.1)(2.4,5.4)
\psline[linewidth=1.8pt](2.4,3.2)(2.4,3.6)                
\rput(2.7,3.2){$p_w$}
\pscurve[linewidth=1.8pt]{-}(2.4,2.6)(1.8,2.9)(2.4,3.2)
\psline(2.4,0.5)(2.4,6.8)
\pscurve[linewidth=1.8pt]{-}(2.4,1.8)(1.0,2.8)(2.0,3.6)   
\pscurve[linestyle=dashed]{-}(2.4,1.0)(0.2,2.7)(2.0,4.4)  
\pscurve[linestyle=dashed]{-}(2.0,4.4)(1.0,5.3)(2.4,6.2)  
\rput(2.4,0.2){x = m}
\end{pspicture}
\caption{Case I: $y(p_{a_i})<y(p_d)$}
\label{f:CaseI}
\end{center}
\end{figure*}
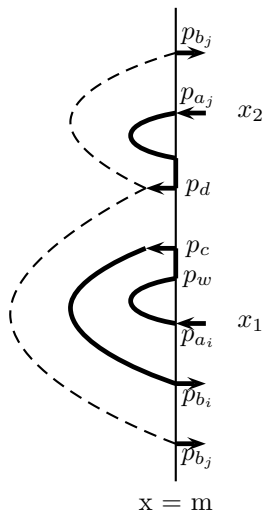

\bigskip

\noindent
{\bf Case I:}  $y(p_{a_i})<y(p_d)$ (See Figure \ref{f:CaseI})

We apply the Alternation Lemma \ref{t:arc-cross} for the alternating sets
$A$ and $B$ with the bijection $f(y(p_{a_\ell})) = y(p_{b_\ell})$ and
$x_1 = y(p_{a_i})$ and $x_2 = y(p_{a_j})$.
Note that $f$ satisfies the non-arc-crossing condition (\ref{e:crossing})
by the Main Lemma. 

We claim that both $f(x_1)$ and $f(x_2)$ are
outside the interval $[x_1,x_2]$.  We show this for $f(x_1)$;
the argument for $f(x_2)$ is similar.  Thus we are to show that the
point $p_{b_i}$ does not lie on the vertical line $(x=m)$ between the
points $p_{a_i}$ and $p_{a_j}$. 

First we show $p_{b_i}$ does
not lie between $p_{a_i}$ and $p_c$.  This is obvious if the segment
$P_{[a_i,c]}$ lies entirely on $(x=m)$.  Otherwise let $w<c$ be such
that the segment $P_{[w,c]}$ lies entirely on $x=m$.  (Note that
$y(p_{a_i})<y(p_w)<y(p_c)$,  because there is no horizontal edge
in column $m-1$ between $p_c$ and $p_d$.)  Then $p_{b_i}$
does not lie between $p_{a_i}$ and $p_w$ by the Main Lemma applied
to the segments $P_{[a_i,w]}$ and $P_{[c,b_i]}$.

Next, note that $p_{b_i}$ does not lie between $p_c$ and $p_d$,
because there is no horizontal edge in column $m-1$ between these
two points.  Finally we claim that $p_{b_i}$ does not lie between $p_d$ and
$p_{a_j}$.  This is obvious if $a_j=d$, and otherwise use the
Main Lemma applied to the segments $P_{[a_j,d]}$ and $P_{[a_i,b_i]}$.

This establishes the hypotheses for the Alternation Lemma.
By that Lemma it follows that there must be some $p_{a_\ell}$
outside the vertical interval between $p_{a_i}$ and $p_{a_j}$
such that $p_{b_\ell}$ lies in that interval.  But this is impossible,
by applying the Main Lemma as above.  This contradiction shows
that Case I is impossible.

\bigskip

\noindent
{\bf Case II:}  $y(p_{a_i})>y(p_d)$  (See Figure \ref{f:CaseII})

\begin{figure*}[htb]
\begin{center}
\begin{pspicture}(3.9,6.2)
\psline[linewidth=1.8pt]{<-}(2.0,1.8)(2.4,1.8) \rput(2.7,1.8){$p_{c}$}
\psline[linewidth=1.8pt]{<-}(2.0,2.6)(2.4,2.6) \rput(2.7,2.6){$p_{d}$}
\psline[linewidth=1.8pt]{<-}(2.4,3.2)(2.8,3.2) \rput(2.7,3.4){$p_{a_j}$}
\psline[linewidth=1.8pt]{->}(2.4,4.0)(2.8,4.0) \rput(2.7,4.2){$p_{b_j}$}
\psline[linewidth=1.8pt]{->}(2.4,4.8)(2.8,4.8) \rput(2.7,5.0){$p_{b_i}$}
\psline[linewidth=1.8pt]{<-}(2.4,5.6)(2.8,5.6) \rput(2.7,5.8){$p_{a_i}$}
\psline[linewidth=1.8pt](2.4,1.2)(2.4,1.8)                 
\psline[linewidth=1.8pt](2.4,2.6)(2.4,3.2)                 
\pscurve[linewidth=1.8pt]{-}(2.4,1.2)(0.5,3.4)(2.4,5.6)   
\pscurve[linewidth=1.8pt]{-}(2.0,1.8)(1.1,3.3)(2.4,4.8)   
\pscurve[linewidth=1.8pt]{-}(2.0,2.6)(1.6,3.2)(2.4,4.0)   
\rput(3.4,4.0){$x_1$}
\rput(3.4,4.8){$x_2$}
\psline(2.4,0.5)(2.4,6.0)
\rput(2.4,0.2){x = m}
\end{pspicture}
\caption{Case II: $y(p_{a_i})>y(p_d)$}
\label{f:CaseII}
\end{center}
\end{figure*}
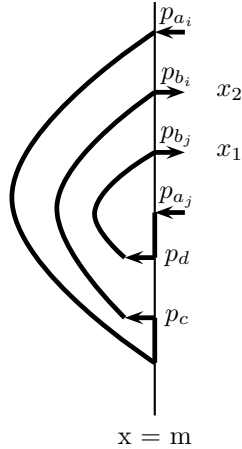

In this case we must have $y(p_{a_i})> y(p_{a_j})$, by the Main Lemma
applied to the segments $P_{[a_i,c]}$ and $P_{[a_j,d]}$.  In fact,
by repeated use of the Main Lemma we can show
$$
   y(p_{a_j})<y(p_{b_j})<y(p_{b_i})<y(p_{a_i})
$$
We get a contradiction by applying the Alternation Lemma, this time
using the inverse bijection $f^{-1}: B \rightarrow A$, with
$x_1 = y(p_{b_j})$ and $x_2 = y(p_{b_i})$.
\end{proof}

\subsection{There Are at Most Two Regions}
\label{s:VZ-second}

Here we formalize and prove the idea that if $P$ is a sequence of
edges that form a closed curve, and $p_1$ and $p_2$ are points
on opposite sides of $P$, then any point in the plane off $P$
can be connected to either $p_1$ or $p_2$ by a path that does not
intersect $P$.
However this path must use points in a refined grid, in order not
to get trapped in a region such as that depicted in Figure \ref{f:region}.
Thus we triple the density of the points by tripling $n$ to $3n$,
and replace each edge in $P$ by a triple of edges. 
We also assume that originally the curve $P$ has no point on the border
of the grid.  (This assumption is different from our convention stated
in Section \ref{s:main}.)

\begin{figure*}[htb]
\begin{center}
\begin{pspicture}(5,4.5)
\psline[linecolor=blue,linewidth=1.8pt]{-}(-0.5,2)(1,2)
\psline[linecolor=blue,linewidth=1.8pt]{-}(1,1)(1,2)
\psline[linecolor=blue,linewidth=1.8pt]{-}(1,1)(4,1)
\psline[linecolor=blue,linewidth=1.8pt]{-}(4,1)(4,3)
\psline[linecolor=blue,linewidth=1.8pt]{-}(4,3)(1,3)
\psline[linecolor=blue,linewidth=1.8pt]{-}(1,3)(1,4.5)
\qdisk(2,2){2pt}\qdisk(3,2){2pt}
\psline(0.0,0.5)(0.0,4.5)
\psline(1.0,0.5)(1.0,4.5)
\psline(2.0,0.5)(2.0,4.5)
\psline(3.0,0.5)(3.0,4.5)
\psline(4.0,0.5)(4.0,4.5)
\psline(-0.5,1.0)(4.5,1.0)
\psline(-0.5,2.0)(4.5,2.0)
\psline(-0.5,3.0)(4.5,3.0)
\psline(-0.5,4.0)(4.5,4.0)
\end{pspicture}
\caption{An ``unwanted'' region with two points.}
\label{f:region}
\end{center}
\end{figure*}

Let $P'$ denote the resulting set of edges.
Note that the new grid has size $(3n)\times (3n)$.
\begin{theorem}\label{t:mostTwo}
The theory \VZ\ proves the following:
Let $P$ be a sequence of edges that form a closed curve,
and suppose that $P$ has no point on the border of the grid.
Let $P'$ be the corresponding sequence of edges in the $(3n)\times(3n)$ grid,
as above.
Let $p_1, p_2$ be any two points on different sides of $P'$
(Definition \ref{d:sides}).
Then any point $p$ (on the new grid)
can be connected to either $p_1$ or $p_2$ by a sequence of edges that does not
intersect $P'$.
\end{theorem}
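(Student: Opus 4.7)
The plan is to explicitly construct, for each $p$ off $P'$, a path in the refined grid from $p$ to $p_1$ or $p_2$, by building two ``tube curves'' that run parallel to $P'$ on its two sides and then routing each $p$ onto one of them.

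First I would define, from the oriented sequence $P'$, two sequences of edges $\mathcal{L}$ and $\mathcal{R}$ on the refined $(3n)\times(3n)$ grid, parallel to $P'$ on its left and right sides. Concretely, for each edge $e_i = \tuple{q_i, q_{i+1}}$ of $P'$, take the perpendicular-shifted edge one unit to its left (resp.\ right) as an edge of $\mathcal{L}$ (resp.\ $\mathcal{R}$); at each turn in $P'$, insert the one extra refined-grid edge needed to splice consecutive tube edges so that adjacent entries share a vertex. The tripling of $P$ into $P'$ is essential here: two parts of $P$ that are distance $1$ apart in the original grid become distance $3$ apart in $P'$, leaving a clear corridor of width at least $1$ for each tube. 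By $\SigZB$ induction on the edge index, using case analysis on the five possible local shapes of two consecutive edges of $P'$ (straight, plus four turn types), $\mathcal{L}$ and $\mathcal{R}$ are closed curves---each consecutive pair of edges shares a vertex---and neither touches any vertex of $P'$.

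Second, for each $p$ off $P'$ I would define its downward sweep: the least $y^* \le y(p)$ such that every grid point $(x(p), y'')$ with $y^* \le y'' \le y(p)$ is off $P'$. If the sweep reaches the grid boundary ($y^* = 0$), then $p$ is connected by a vertical edge-sequence to the boundary; since $P$ is interior and the boundary of the refined grid is a connected cycle off $P'$ (provable by $\SigZB$ induction), $p$ connects to whichever of $p_1,p_2$ lies in the ``outer'' tube. Otherwise $(x(p), y^*)$ is adjacent to a vertex of $P'$, and a lateral move of at most two refined-grid edges places it onto a vertex of $\mathcal{L}$ or $\mathcal{R}$; the subsequence of that tube from the landing vertex to $p_1$ or $p_2$ completes the path. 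To identify $p_1$ and $p_2$ with tube vertices, observe that Definition \ref{d:sides} forces the midpoint between $p_1,p_2$ to have two horizontal edges of $P'$ incident to it (the alternative orientations would give $p_1$ or $p_2$ positive $P'$-degree), so $p_1$ and $p_2$ are precisely the two perpendicular neighbors of a horizontal edge of $P'$, hence one lies on $\mathcal{L}$ and the other on $\mathcal{R}$.

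The main obstacle is verifying in $\VZ$ that $\mathcal{L}$ and $\mathcal{R}$ are genuinely closed curves confined to their respective sides of $P'$ and do not collide with $P'$ or with themselves. The Edge Alternation Theorem is what ultimately rules out collisions, since two distinct perpendicular neighbors trying to occupy the same refined-grid vertex would force a pair of parallel $P'$-edges in the same column violating the alternation. Integrating this cleanly with the turn-by-turn corner analysis for the tubes, while keeping everything $\SigZB$-definable and avoiding any need to count edges of $P$, is the delicate technical point; once the tubes are in place, the sweep and completion are explicit constructions.
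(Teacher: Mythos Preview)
Your tube construction $\mathcal{L},\mathcal{R}$ is exactly what the paper builds (it calls them $Q_1,Q_2$), and your observation that $p_1,p_2$ land one on each tube is also in the paper (the paper invokes the Main Theorem for $\VZ$ here rather than arguing directly from the edge configuration, but your local argument is fine). The divergence is entirely in how you route an arbitrary $p$ onto a tube.

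The paper does this in one stroke: compute the Manhattan distances $d(p,Q_1)$ and $d(p,Q_2)$, pick the smaller, and take any shortest path to the nearer tube. This path cannot touch $P'$, because every off-$P'$ neighbour of a $P'$-vertex already lies on $Q_1\cup Q_2$ (the tripling guarantees this even at corners), so the step \emph{before} first entering $P'$ would already be on some $Q_i$ at strictly smaller distance, contradicting minimality. No case split, no boundary case, no sweep.

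Your downward-sweep variant works in the ``hit $P'$'' case --- in fact the sweep lands \emph{directly} on a tube, so your ``lateral move of at most two edges'' is unnecessary --- but the boundary case is a genuine gap. Once the sweep reaches $y=0$ you still have to connect the boundary cycle to whichever of $p_1,p_2$ lies on the outer tube, and you have not said how. The boundary is at distance $\ge 2$ from both tubes, so this is another instance of the very problem you are solving; appealing to ``the outer tube'' presupposes a separation statement you have not established. The paper's nearest-tube argument handles boundary points and interior points uniformly and avoids this circularity.

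Finally, you do not need the Edge Alternation Theorem to rule out tube self-intersection. Tripling already forces any two non-consecutive edges of $P'$ to be at distance $\ge 3$, so their distance-$1$ offsets cannot coincide; consecutive edges are handled by your corner splicing. The paper accordingly never invokes alternation here, and uses Theorem~\ref{t:JCT-seq} only once, to place $p_1$ and $p_2$ on different tubes.
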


\begin{proof}
Since edges in $P'$ are directed it makes sense to speak of edges a
distance 1 to the left of $P'$ and a distance 1 to the right of $P'$.
Thus, taking care when $P'$ turns corners, it is straightforward
to define (using \COMP{\SigZB}) two
sequences $Q_1$, $Q_2$ of edges on either side of
$P'$, i.e., both $Q_1$ and $Q_2$ have distance 1 (on the new grid) to $P'$.
Then $p_1$ and $p_2$ must lie on $Q_1$ or $Q_2$.
By the Main Theorem for \VZ, $p_1$ and $p_2$ cannot be on the same $Q_i$.
So assume w.l.o.g. that $p_1$ is on $Q_1$ and $p_2$ is on $Q_2$.

We describe informally a procedure that gives a sequence of edges connecting
any point $p$ to $p_1$ or $p_2$.
First we compute (using the number minimization principle) the Manhattan distances
($d(p,Q_1)$ and $d(p,Q_2)$ respectively) from $p$ to $Q_1$ and $Q_2$.
Suppose w.o.l.g. that
$$d(p,Q_1) \le d(p,Q_2)$$
Let $q$ be a point on $Q_1$ so that $d(p,q) = d(p,Q_1)$.
Then any shortest sequence of edges that connect $p$ and $q$ does not intersect $P'$.
Concatenate one such sequence and the sequence of edges on $Q_1$ that connect $q$ and $p_1$,
we have a sequence of edges that connects $p$ and $p_1$ without intersecting $P'$.
\end{proof}

\section{Equivalence to the st-Connectivity Principle}
\label{s:st}

The st-connectivity principle states that it is not possible to have a
red path and a blue path which connect diagonally opposite
corners of the grid graph unless the paths intersect.
Here we show that over \VZ\ this principle is equivalent to the discrete Jordan Curve Theorem.
As a result,
the set-of-edges version of this principle is provable in $\VZ(2)$, and the
sequence-of-edges version is provable in \VZ.
First we consider the set-of-edges setting.

\begin{theorem}\label{t:conn-set}
The theory \VZ\ proves that the following are equivalent:
\begin{enumerate}
\item
[(a)]
Suppose that $B$ is a set  of edges forming a curve,
$p_1$ and $p_2$ are two points on different sides of $B$,
and that $R$ is a set  of edges that connects $p_1$ and $p_2$.
Then $B$ and $R$ intersect.
\item
[(b)]
Suppose that $B$ is a set  of edges that connects
\tuple{0,n} and \tuple{n,0},
and $R$ is a set  of edges that connects \tuple{0,0} and \tuple{n,n}.
Then $B$ and $R$ intersect.
\end{enumerate}
\end{theorem}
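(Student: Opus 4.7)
The plan is to prove both directions of the equivalence in $\VZ$ by a uniform embedding strategy: embed the original $(n+1)\times(n+1)$ grid into a slightly larger grid (tripling the density around the relevant vertices if needed for routing clearance, as in the proof of Theorem~\ref{t:mostTwo}), extend the given edge sets by auxiliary paths routed on pairwise disjoint portions of the new border, and then invoke the other statement on the enlarged sets. All such extensions are $\SigZB$-definable operations on the inputs, so the construction is available in $\VZ$. In each direction, the essential step---and the main obstacle---is a vertex-by-vertex check that any intersection of the enlarged sets forces an intersection of the original sets.

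For (a)$\Rightarrow$(b), assume $B$ connects $(0,n),(n,0)$ and $R$ connects $(0,0),(n,n)$. I would enlarge the grid below and to the left, and close $B$ into a single closed curve $B^{\ast}$ by adjoining an L-shaped border path that runs from $(0,n)$ along the new border through the new bottom-left corner and back to $(n,0)$. Then $B^{\ast}$ encloses $(0,0)$, whereas $(n,n)$ lies outside. Pick a horizontal edge on the bottom portion of the extension; the vertex just above it (inside $B^{\ast}$) and just below it (outside, using one further border row) form the local configuration of Definition~\ref{d:sides}. Build $R^{\ast}$ by adjoining to $R$ an interior path from $(0,0)$ to the upper point and an exterior path from $(n,n)$ wrapping around the remaining borders to the lower point, using border portions disjoint from the $B$-extension. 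By (a), $B^{\ast}\cap R^{\ast}\ne\emptyset$; because the three extensions occupy pairwise disjoint border regions, the only candidate shared vertices not already in $B\cap R$ are the attachment vertices $(0,n),(n,0),(0,0),(n,n)$, and at each of these, simultaneous membership in the opposite original set directly produces an element of $B\cap R$.

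For (b)$\Rightarrow$(a), assume $B$ is a curve, $p$ is the middle vertex of $B$-degree $2$ guaranteed by Definition~\ref{d:sides} (with both incident $B$-edges horizontal), and $R$ connects $p_{1},p_{2}$. Let $e=\{p,q\}$ be one of the horizontal $B$-edges at $p$; then $B_{1}=B\setminus\{e\}$ is a connecting set from $p$ to $q$. Embed the original grid into a larger grid whose four corners $(0,N),(N,0),(0,0),(N,N)$ lie strictly outside the original grid, tripling the density as needed so that routing near $p,q,p_{1},p_{2}$ avoids other original-grid vertices of $B$ and $R$. Extend $B_{1}$ to $B'$ by adjoining a border path from $p$ to $(0,N)$ along one side of the enlargement and another from $q$ to $(N,0)$ along a second side; extend $R$ to $R'$ by adjoining border paths from $p_{1}$ to $(0,0)$ and from $p_{2}$ to $(N,N)$ along the remaining two sides. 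Then $B'$ connects $(0,N),(N,0)$ and $R'$ connects $(0,0),(N,N)$, so (b) yields a shared vertex $v$. The main obstacle here is once more the vertex check: since the four extensions lie on pairwise disjoint portions of the border, since $p_{1},p_{2}$ have $B$-degree $0$ by hypothesis, since the corners of the enlarged grid lie outside the original grid, and since the extensions are routed to exit immediately to the border, the only remaining possibility is $v\in B\cap R$, which yields the conclusion of (a).
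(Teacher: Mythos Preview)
Your argument for (a)$\Rightarrow$(b) is essentially the paper's: close $B$ into a curve by running along new border rows and columns, and re-route $R$ so that its new endpoints sit immediately above and below a horizontal edge of the added portion. The details differ cosmetically, but the idea and the verification are the same.

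The direction (b)$\Rightarrow$(a), however, has a genuine gap. Your plan is to delete one $B$-edge at the midpoint $p$, extend the resulting connecting set $B_{1}$ from its two endpoints $p,q$ out to two opposite corners of an enlarged grid, and simultaneously extend $R$ from $p_{1},p_{2}$ to the other two corners, with all four extensions ``routed to exit immediately to the border'' and then running along pairwise disjoint border arcs. But $p,q,p_{1},p_{2}$ are \emph{interior} points of the grid; there is no way to reach the border from them without traversing the interior. Tripling the density does not help: it creates local wiggle room near the attachment vertices, but the refined $B$ still separates the plane exactly as the original $B$ does. Concretely, if $B$ is a single cycle that encloses $p_{1}$, then \emph{every} path from $p_{1}$ to the boundary of the enlarged grid meets $B$. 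Your red extension from $p_{1}$ to $(0,0)$ would then meet $B$ at some vertex $v$, giving $v\in B'\cap R'$ with $v\notin R$, so the final ``vertex check'' cannot conclude $v\in B\cap R$. (A minor side issue: Definition~\ref{d:sides} only requires the $B$-degree of $p$ to be $2$; the two $B$-edges at $p$ need not both be horizontal.)

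The paper handles this direction by an entirely different device, due to Thapen: place $p$ at the center of the grid, reflect each of the four triangular quarters of the grid outward through its bounding side, and take as $B'$ and $R'$ the \emph{images} of $B$ and $R$ under these reflections, stitched together across the reflected diagonals. The point is that the images live entirely outside the original grid, so no interior routing is ever needed; if $B$ and $R$ were disjoint, so are their reflected images, and the stitching segments can be placed in the outer corners where they are manifestly disjoint as well. The reflected paths then connect opposite corners of a $2n\times 2n$ grid, contradicting (b). This reflection trick is precisely what is missing from your proposal.
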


\begin{proof}
First we show that (a) implies (b).
Let $B$ and $R$ be sets as in (b).

\begin{figure}[htb]
\begin{center}
\psset{unit=0.5cm}
\begin{pspicture}(10,11)
\psgrid[gridlabels=0pt,subgriddiv=1,griddots=5](0,1)(10,11)
%
\psline[linewidth=1.0pt]{-}(0,2)(8,2)
\psline[linewidth=1.0pt]{-}(0,2)(0,10)
\psline[linewidth=1.0pt]{-}(8,2)(8,10)
\psline[linewidth=1.0pt]{-}(0,10)(8,10)
\psline[linecolor=red,linewidth=1.8pt]{-}(0,2)(1,2)
\psline[linecolor=red,linewidth=1.8pt]{-}(1,2)(1,3)
\psline[linecolor=red,linewidth=1.8pt]{-}(1,3)(4,3)
\psline[linecolor=red,linewidth=1.8pt]{-}(4,3)(4,5)
\psline[linecolor=red,linewidth=1.8pt]{-}(8,10)(8,9)
\psline[linecolor=red,linewidth=1.8pt]{-}(8,9)(7,9)
\psline[linecolor=red,linewidth=1.8pt]{-}(7,9)(7,7)
\psline[linecolor=red,linewidth=1.8pt]{-}(7,7)(4,7)
\psline[linecolor=blue,linewidth=1.8pt](0,10)(1,10)
\psline[linecolor=blue,linewidth=1.8pt](1,10)(1,9)
\psline[linecolor=blue,linewidth=1.8pt](1,9)(1,8)
\psline[linecolor=blue,linewidth=1.8pt](1,8)(2,8)
\psline[linecolor=blue,linewidth=1.8pt](2,8)(2,6)
\psline[linecolor=blue,linewidth=1.8pt](2,6)(5,6)
\psline[linecolor=blue,linewidth=1.8pt](5,6)(5,4)
\psline[linecolor=blue,linewidth=1.8pt](5,4)(7,4)
\psline[linecolor=blue,linewidth=1.8pt](7,4)(7,2)
\psline[linecolor=blue,linewidth=1.8pt](8,2)(7,2)
\psline[linecolor=blue,linewidth=1.8pt](4,8)(5,8)
\psline[linecolor=blue,linewidth=1.8pt](5,8)(5,9)
\psline[linecolor=blue,linewidth=1.8pt](5,9)(4,9)
\psline[linecolor=blue,linewidth=1.8pt](4,9)(4,8)
\psline[linecolor=blue,linewidth=1.8pt]{-}(6,5)(7,5)
\psline[linecolor=blue,linewidth=1.8pt]{-}(7,5)(7,6)
\psline[linecolor=blue,linewidth=1.8pt]{-}(7,6)(6,6)
\psline[linecolor=blue,linewidth=1.8pt]{-}(6,6)(6,5)
\psline[linecolor=red,linewidth=1.8pt]{-}(2,4)(3,4)
\psline[linecolor=red,linewidth=1.8pt]{-}(3,4)(3,5)
\psline[linecolor=red,linewidth=1.8pt]{-}(3,5)(2,5)
\psline[linecolor=red,linewidth=1.8pt]{-}(2,5)(2,4)
\psline[linecolor=red,linewidth=1.8pt]{-}(0,2)(0,1)
\psline[linecolor=red,linewidth=1.8pt]{-}(0,1)(9,1)
\psline[linecolor=red,linewidth=1.8pt]{-}(8,10)(9,10)
\psline[linecolor=red,linewidth=1.8pt]{-}(9,10)(9,3)
\psline[linecolor=blue,linewidth=1.8pt](0,10)(0,11)
\psline[linecolor=blue,linewidth=1.8pt](0,11)(10,11)
\psline[linecolor=blue,linewidth=1.8pt](10,11)(10,2)
\psline[linecolor=blue,linewidth=1.8pt](10,2)(8,2)
%
\qdisk(9,1){2pt}
\qdisk(9,3){2pt}
\rput(0,0.5){0}\rput(-1,2){0}\rput(-1,1){-1}
\rput(9,0.5){$n$+1}\rput(10.5,0.5){$n$+2}
\rput(-1,11){$n$+1}\rput(-1,10){$n$}
\rput(9.5,1.5){$p_1$}
\rput(9.5,3.5){$p_2$}
\end{pspicture}
\end{center}
\caption{Reduction from st-connectivity to discrete Jordan Curve Theorem}
\label{f:reduce-st}
\end{figure}
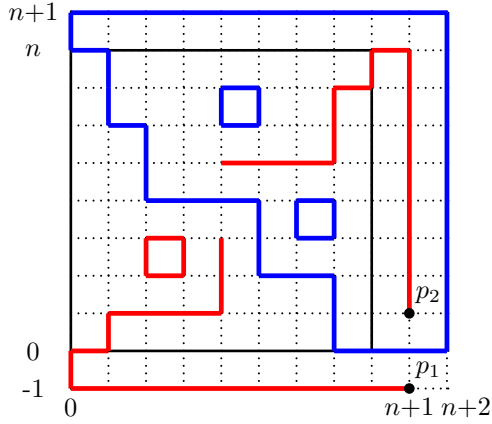

We extend the grid to size $(n+2)\times (n+2)$ as in Figure \ref{f:reduce-st}.
Although the $y$-coordinates now range from $-1$ to $n+1$, this can be easily
fixed and we omit the details.

Then we turn $B$ into a closed curve $B'$ by adding the following blue edges
that connect \tuple{0,n} and \tuple{n,0}:
\begin{gather*}
(\tuple{0,n},\tuple{0,n+1}),  \\
(\tuple{i,n+1},\tuple{i+1,n+1}) \ \ \mbox{for} \ \ 0\le i \le n+1,\\
(\tuple{n+2,j+1},\tuple{n+2,j}) \ \ \mbox{for} \ \ 0\le j \le n,\\
(\tuple{n+1,0},\tuple{n,0}),(\tuple{n+2,0},\tuple{n+1,0}) 
\end{gather*}

Similarly we turn $R$ into a red path $R'$ that connects $p_1 = \tuple{n+1,-1}$ 
to $p_2 = \tuple{n+1,1}$ by adding the following red edges:
$$(\tuple{0,-1},\tuple{0,0}),\ \ \mbox{and}\ \ (\tuple{i+1,-1},\tuple{i,-1})\ \ \mbox{for}\ \ 0\le i \le n$$
and
$$(\tuple{n,n},\tuple{n+1,n}), \ \ \mbox{and}\ \ (\tuple{n+1,i+1},\tuple{n+1,i})\ \ \mbox{for}\ \ 1\le i \le n-1$$

By (a) $B'$ and $R'$ intersect.
The newly added paths are outside the original grid and clearly do not intersect each other,
so it follows that $B$ and $R$ intersect.

Now we prove (a) from (b).
Basically we have to turn a red path connecting two points $p_1$, $p_2$
into a red path that connects two opposite corners of the grid,
and a blue curve into a blue path that connects the other two corners.
This turns out to be nontrivial; the following construction is by Neil Thapen.

\ignore{
Intuitively, we regard the grid as a unit disk
and two points $p_1$, $p_2$ lie on the $y$-axis sufficiently close to the origin.
The mapping $x\mapsto 1/x$ sends $p_1$ and $p_2$ to two points $p_1'$, $p_2'$ outside the unit disk,
and maps the red path to a path connecting $p_1'$, $p_2'$.
Assume that on a small open neighborhood of the origin the blue curve lies on the $x$-axis.
Then similarly, the image of the part of the blue curve outside this neighborhood
is a path connects two points $q_1'$, $q_2'$ outside the unit disk.
(See Figure below.)
Thus we obtain two paths that connects
}

Let $B$ and $R$ be sets as in (a).
Suppose for a contradiction that $B$ and $R$ do not intersect.
By extending the grid if necessary, we can assume that the midpoint $p$ of $p_1$ and $p_2$
(as in Figure \ref{f:2sides}) is the center of the $n\times n$ grid (see Figure \ref{f:st-JCT-1}).

\begin{figure}[htbp]
\begin{center}
\psset{unit=0.5cm}
\begin{pspicture}(8,8)
%
\psline[linewidth=0.8pt]{-}(0,0)(8,0)
\psline[linewidth=0.8pt]{-}(0,0)(0,8)
\psline[linewidth=0.8pt]{-}(8,0)(8,8)
\psline[linewidth=0.8pt]{-}(0,8)(8,8)
\rput(4,2){$p_1$}
\rput(4,6){$p_2$}
\rput(4.5,4.5){$p$}
\qdisk(4,5){2pt}
\qdisk(4,4){2pt}
\qdisk(4,3){2pt}
\rput(2.5,3.5){$u$}
\rput(3,1.5){$v$}
\qdisk(3,3){2pt}
\qdisk(3,2){2pt}
%
\psline[linecolor=red,linewidth=1.8pt]{-}(4,3)(3,3)
\psline[linecolor=red,linewidth=1.8pt]{-}(3,3)(3,2)
\psline[linecolor=red,linewidth=1.8pt]{-}(3,2)(2,2)
\psline[linecolor=red,linewidth=1.8pt]{-}(2,2)(2,5)
\psline[linecolor=red,linewidth=1.8pt]{-}(2,5)(4,5)
\psline[linecolor=blue,linewidth=1.8pt](1,1)(7,1)
\psline[linecolor=blue,linewidth=1.8pt](7,1)(7,7)
\psline[linecolor=blue,linewidth=1.8pt](7,7)(6,7)
\psline[linecolor=blue,linewidth=1.8pt](6,7)(6,4)
\psline[linecolor=blue,linewidth=1.8pt](6,4)(1,4)
\psline[linecolor=blue,linewidth=1.8pt](1,4)(1,1)
\psline[linestyle=dashed,linewidth=0.5pt](0,0)(8,8)
\psline[linestyle=dashed,linewidth=0.5pt](8,0)(0,8)
\rput(1.5,2){$r$}
\qdisk(2,2){2pt}
\end{pspicture}
\caption{Sets $B$ and $R$ as in discrete Jordan Curve Theorem}
\label{f:st-JCT-1}

\begin{pspicture}(16,17.5)
\psline[linewidth=0.8pt]{-}(0,0)(0,16)
\psline[linewidth=0.8pt]{-}(0,16)(16,16)
\psline[linewidth=0.8pt]{-}(16,16)(16,0)
\psline[linewidth=0.8pt]{-}(0,0)(16,0)
\psline[linewidth=0.8pt]{-}(4,4)(12,4)
\psline[linewidth=0.8pt]{-}(12,4)(12,12)
\psline[linewidth=0.8pt]{-}(12,12)(4,12)
\psline[linewidth=0.8pt]{-}(4,12)(4,4)
\rput(8,6){$p_1$}
\rput(8,10){$p_2$}
\rput(8.5,8.0){$p$}
\qdisk(8,9){2pt}
\qdisk(8,8){2pt}
\qdisk(8,7){2pt}
\rput(6.5,7.5){$u$}
\rput(7,5.5){$v$}
\qdisk(7,7){2pt}
\qdisk(7,6){2pt}
\psline[linestyle=dashed,linewidth=0.5pt](4,4)(12,12)
\psline[linestyle=dashed,linewidth=0.5pt](12,4)(4,12)
\psline[linestyle=dashed,linewidth=0.5pt](0,8)(8,0)
\psline[linestyle=dashed,linewidth=0.5pt](0,8)(8,16)
\psline[linestyle=dashed,linewidth=0.5pt](16,8)(8,16)
\psline[linestyle=dashed,linewidth=0.5pt](16,8)(8,0)
\psline[linecolor=red,linewidth=1.8pt]{-}(0,0)(8,0)
\psline[linecolor=red,linewidth=1.8pt]{-}(8,0)(8,1)
\psline[linecolor=red,linewidth=1.8pt]{-}(8,1)(7,1)
\psline[linecolor=red,linewidth=1.8pt]{-}(7,1)(7,2)
\psline[linecolor=red,linewidth=1.8pt]{-}(7,2)(6,2)
\psline[linestyle=dashed,linecolor=red,linewidth=1.8pt]{-}(6,2)(2,2)
\psline[linestyle=dashed,linecolor=red,linewidth=1.8pt]{-}(2,2)(2,6)
\psline[linecolor=red,linewidth=1.8pt]{-}(2,6)(2,9)
\psline[linecolor=red,linewidth=1.8pt]{-}(2,9)(1,9)
\psline[linestyle=dashed,linecolor=red,linewidth=1.8pt]{-}(1,9)(1,15)
\psline[linestyle=dashed,linecolor=red,linewidth=1.8pt]{-}(1,15)(7,15)
\psline[linecolor=red,linewidth=1.8pt]{-}(7,15)(8,15)
\psline[linecolor=red,linewidth=1.8pt]{-}(8,15)(8,16)
\psline[linecolor=red,linewidth=1.8pt]{-}(8,16)(16,16)
\psline[linecolor=blue,linewidth=1.8pt]{-}(0,16)(0,8)
\psline[linecolor=blue,linewidth=1.8pt]{-}(0,8)(3,8)
\psline[linecolor=blue,linewidth=1.8pt]{-}(3,8)(3,5)
\psline[linestyle=dashed,linecolor=blue,linewidth=1.8pt]{-}(3,5)(3,3)
\psline[linestyle=dashed,linecolor=blue,linewidth=1.8pt]{-}(3,3)(5,3)
\psline[linecolor=blue,linewidth=1.8pt]{-}(5,3)(11,3)
\psline[linestyle=dashed,linecolor=blue,linewidth=1.8pt]{-}(11,3)(13,3)
\psline[linestyle=dashed,linecolor=blue,linewidth=1.8pt]{-}(13,3)(13,5)
\psline[linecolor=blue,linewidth=1.8pt]{-}(13,5)(13,11)
\psline[linestyle=dashed,linecolor=blue,linewidth=1.8pt]{-}(13,11)(13,13)
\psline[linestyle=dashed,linecolor=blue,linewidth=1.8pt]{-}(13,13)(11,13)
\psline[linecolor=blue,linewidth=1.8pt]{-}(11,13)(10,13)
\psline[linecolor=blue,linewidth=1.8pt]{-}(10,13)(10,14)
\psline[linestyle=dashed,linecolor=blue,linewidth=1.8pt]{-}(10,14)(14,14)
\psline[linestyle=dashed,linecolor=blue,linewidth=1.8pt]{-}(14,14)(14,10)
\psline[linecolor=blue,linewidth=1.8pt]{-}(14,10)(14,8)
\psline[linecolor=blue,linewidth=1.8pt]{-}(14,8)(16,8)
\psline[linecolor=blue,linewidth=1.8pt]{-}(16,8)(16,0)
\rput(8.0,1.7){$p_1'$}
\rput(8.0,14.4){$p_2'$}
\rput(8,-0.5){$q_0$}
\rput(16.7,8.0){$q_1$}
\rput(8,16.5){$q_2$}
\rput(-0.7,8.0){$q_3$}
\rput(6.5,0.7){$u'$}
\rput(7,2.5){$v'$}
\qdisk(7,1){2pt}
\qdisk(7,2){2pt}
%
\rput(5.5,6){$r$}\rput(5.8,1.5){$r'$}\rput(1.4,5.9){$r''$}
\qdisk(6,6){2pt}
\qdisk(6,2){2pt}
\qdisk(2,6){2pt}
\qdisk(8,1){2pt}
\qdisk(8,15){2pt}
\qdisk(8,0){2pt}
\qdisk(8,16){2pt}
\qdisk(16,8){2pt}
\qdisk(0,8){2pt}
\end{pspicture}
\end{center}
\caption{Reduction from discrete Jordan Curve Theorem to st-connectivity}
\label{f:st-JCT-2}
\end{figure}

Using $B$ and $R$ our goal is to construct sets $B'$ and $R'$ that form nonintersecting paths which connect opposite
corners of a $2n\times 2n$ grid.
This violates (b) and we are done.

We will informally describe the sets $B'$ and $R'$;
formal definitions are straightforward and are left to the reader.
Consider the four triangular quarters of the original grid
which are determined by the two diagonals.
Take the image of each triangle by reflection through its grid edge base.
The results, together with the original grid, form a $\sqrt{2}n\times \sqrt{2}n$ square
whose four vertices $q_0$, $q_1$, $q_2$, $q_3$ are reflection images of the center $p$ through the edges
of the original grid (see Figure \ref{f:st-JCT-2}).
The $2n\times 2n$ grid is determined by the appropriate vertical and horizontal lines that go through
$q_0$, $q_1$, $q_2$, $q_3$.
(So $q_0, q_1, q_2, q_3$ will be the midpoints of the edges of the $2n\times 2n$ grid.)


\ignore{
\begin{figure}[htb]
\begin{center}
\psset{unit=0.5cm}
\begin{pspicture}(16,16)
\psline[linewidth=0.8pt]{-}(0,0)(0,16)
\psline[linewidth=0.8pt]{-}(0,16)(16,16)
\psline[linewidth=0.8pt]{-}(16,16)(16,0)
\psline[linewidth=0.8pt]{-}(0,0)(16,0)
\psline[linewidth=0.8pt]{-}(4,4)(12,4)
\psline[linewidth=0.8pt]{-}(12,4)(12,12)
\psline[linewidth=0.8pt]{-}(12,12)(4,12)
\psline[linewidth=0.8pt]{-}(4,12)(4,4)
\rput(8,6){$p_1$}
\rput(8,10){$p_2$}
\rput(8.5,8.0){$p$}
\qdisk(8,9){2pt}
\qdisk(8,8){2pt}
\qdisk(8,7){2pt}
%
%
\psline[linestyle=dashed,linewidth=0.5pt](4,4)(12,12)
\psline[linestyle=dashed,linewidth=0.5pt](12,4)(4,12)
\psline[linestyle=dashed,linewidth=0.5pt](0,8)(8,0)
\psline[linestyle=dashed,linewidth=0.5pt](0,8)(8,16)
\psline[linestyle=dashed,linewidth=0.5pt](16,8)(8,16)
\psline[linestyle=dashed,linewidth=0.5pt](16,8)(8,0)
\psline[linecolor=red,linewidth=1.8pt]{-}(0,0)(8,0)
\psline[linecolor=red,linewidth=1.8pt]{-}(8,0)(8,1)
\psline[linecolor=red,linewidth=1.8pt]{-}(8,1)(7,1)
\psline[linecolor=red,linewidth=1.8pt]{-}(7,1)(7,2)
\psline[linecolor=red,linewidth=1.8pt]{-}(7,2)(6,2)
\psline[linestyle=dashed,linecolor=red,linewidth=1.8pt]{-}(6,2)(2,2)
\psline[linestyle=dashed,linecolor=red,linewidth=1.8pt]{-}(2,2)(2,6)
\psline[linecolor=red,linewidth=1.8pt]{-}(2,6)(2,9)
\psline[linecolor=red,linewidth=1.8pt]{-}(2,9)(1,9)
\psline[linestyle=dashed,linecolor=red,linewidth=1.8pt]{-}(1,9)(1,15)
\psline[linestyle=dashed,linecolor=red,linewidth=1.8pt]{-}(1,15)(7,15)
\psline[linecolor=red,linewidth=1.8pt]{-}(7,15)(8,15)
\psline[linecolor=red,linewidth=1.8pt]{-}(8,15)(8,16)
\psline[linecolor=red,linewidth=1.8pt]{-}(8,16)(16,16)
\psline[linecolor=blue,linewidth=1.8pt]{-}(0,16)(0,8)
\psline[linecolor=blue,linewidth=1.8pt]{-}(0,8)(3,8)
\psline[linecolor=blue,linewidth=1.8pt]{-}(3,8)(3,5)
\psline[linestyle=dashed,linecolor=blue,linewidth=1.8pt]{-}(3,5)(3,3)
\psline[linestyle=dashed,linecolor=blue,linewidth=1.8pt]{-}(3,3)(5,3)
\psline[linecolor=blue,linewidth=1.8pt]{-}(5,3)(11,3)
\psline[linestyle=dashed,linecolor=blue,linewidth=1.8pt]{-}(11,3)(13,3)
\psline[linestyle=dashed,linecolor=blue,linewidth=1.8pt]{-}(13,3)(13,5)
\psline[linecolor=blue,linewidth=1.8pt]{-}(13,5)(13,11)
\psline[linestyle=dashed,linecolor=blue,linewidth=1.8pt]{-}(13,11)(13,13)
\psline[linestyle=dashed,linecolor=blue,linewidth=1.8pt]{-}(13,13)(11,13)
\psline[linecolor=blue,linewidth=1.8pt]{-}(11,13)(10,13)
\psline[linecolor=blue,linewidth=1.8pt]{-}(10,13)(10,14)
\psline[linestyle=dashed,linecolor=blue,linewidth=1.8pt]{-}(10,14)(14,14)
\psline[linestyle=dashed,linecolor=blue,linewidth=1.8pt]{-}(14,14)(14,10)
\psline[linecolor=blue,linewidth=1.8pt]{-}(14,10)(14,8)
\psline[linecolor=blue,linewidth=1.8pt]{-}(14,8)(16,8)
\psline[linecolor=blue,linewidth=1.8pt]{-}(16,8)(16,0)
\rput(8.0,1.7){$p_1'$}
\rput(8.0,14.4){$p_2'$}
\rput(8,-0.5){$q_0$}
\rput(16.7,8.0){$q_1$}
\rput(8,16.5){$q_2$}
\rput(-0.7,8.0){$q_3$}
\rput(5.5,6){$r$}\rput(6,1.5){$r_1$}\rput(1.5,6){$r_2$}
\qdisk(6,6){2pt}
\qdisk(6,2){2pt}
\qdisk(2,6){2pt}
\qdisk(8,1){2pt}
\qdisk(8,15){2pt}
\qdisk(8,0){2pt}
\qdisk(8,16){2pt}
\qdisk(16,8){2pt}
\qdisk(0,8){2pt}
\end{pspicture}
\end{center}
\caption{Sets $B$ and $R$ as in discrete Jordan Curve Theorem}
\label{f:st-JCT-2}
\end{figure}
}

The image of the red path $R$ are disconnected segments that lie inside the square 
$q_0 q_1 q_2q_3$ but outside the original $n\times n$ grid.
It is easy to add vertical and horizontal lines to connect these segments.
For example, consider a point $r$ where $R$ cuts a diagonal as in Figure \ref{f:st-JCT-1}.
Its two images $r', r''$ can be connected by two dashed red lines as drawn in 
in Figure \ref{f:st-JCT-2}.
As a result, we obtain a red path that connects the images $p_1'$, $p_2'$ of $p_1$, $p_2$.
This red path is in turn easily extended to a path that connects the lower-left and upper-right
corners of the $2n\times 2n$ grid as shown in Figure \ref{f:st-JCT-2}.

\ignore{
The image of $R$ consists of (red) segments inside the square with vertices $q_0,q_1, q_2, q_3$.
Two of these segments has $p_1'$ and $p_2'$ as one of their endpoints,
where $p_1', p_2'$ are images of $p_1$ and $p_2$.
Note that each point where $R$ crosses a diagonal of the original $n\times n$ grid
has two images on the edges $(q_0,q_1)$, $(q_1,q_2)$, $(q_2,q_3)$, $(q_3,q_0)$.
For example point $r$ in Figure \ref{f:st-JCT-2} has two images $r_1$ and $r_2$.
We connect these two points by two grid lines (drawn as dashed red lines in Figure \ref{f:st-JCT-2}).
Now we have a set of edges that connect $p_1'$ and $p_2'$.
To turn this set into a path $R'$ connecting the lower-left and upper-right corners
we connect $p_1'$ to $q_0$ then to the lower-left corner, and $p_2'$ to $q_2$ then to the upper-right corner.
}

Similarly, the image of $B$ can be turned into a blue path $B'$ connecting the
upper-left and lower-right corners of the $2n\times 2n$ grid.
Given that $B$ and $R$ do not intersect,
it can be verified that $B'$ and $R'$ do not intersect,
and this completes our proof.
\end{proof}

The next theorem is for the sequence-of-edges setting.

\begin{theorem}\label{t:conn-seq}
The theory \VZ\ proves that the following are equivalent:
\begin{enumerate}
\item
[(a)]
Suppose that $B$ is a sequence of edges forming a curve,
$p_1$ and $p_2$ are two points on different sides of $B$,
and that $R$ is a sequence  of edges that connects $p_1$ and $p_2$.
Then $B$ and $R$ intersect.
\item
[(b)]
Suppose that $B$ is a sequence of edges that connects
\tuple{0,n} and \tuple{n,0},
and $R$ is a sequence  of edges that connects \tuple{0,0} and \tuple{n,n}.
Then $B$ and $R$ intersect.
\end{enumerate}
\end{theorem}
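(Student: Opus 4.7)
The plan is to follow the structure of the proof of Theorem~\ref{t:conn-set} essentially verbatim, observing that the two reductions used there can be implemented in a way that naturally preserves a sequence-of-edges representation. Throughout I argue in \VZ, and I use \COMP{\SigZB} to exhibit each constructed sequence, together with the fact that concatenation and explicit reindexing of bounded-length sequences are \FACZ\ operations.

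For the direction (a)$\Rightarrow$(b), the construction is a direct translation of the set-version proof. Given blue and red sequences $B,R$ as in (b), I extend the grid as in Figure~\ref{f:reduce-st}, and close $B$ into a closed-curve sequence $B'$ by concatenating $B$ with an explicit sequence of edges going around the outside of the extended grid back to its starting point. Similarly I extend $R$ at both ends by short explicit sequences of edges that reach two points $p_1,p_2$ on opposite sides of $B'$. These added pieces have coordinates given by \SigZB-formulas, and the ordering is supplied directly by their indexing along the boundary traversal. Since the added arcs lie outside the original grid and are pairwise disjoint by construction, any intersection between $B'$ and $R'$ forced by (a) must occur between the original $B$ and $R$.

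For the direction (b)$\Rightarrow$(a), I apply Neil Thapen's reflection construction, the only modification being that each reflected edge now carries an index. After extending the grid so that the midpoint of $p_1p_2$ is the centre, I replace each edge $e_i$ of $R$ (indexed by its position $i$ in the sequence) by the sequence of its mirror images under reflection through the base edges of the four triangular quarters, followed, when $e_i$ crosses a diagonal, by the short grid connector joining the two mirror images (the dashed red arcs of Figure~\ref{f:st-JCT-2}). The same procedure is applied to $B$. The output red object is then a sequence of edges from the image $p_1'$ of $p_1$ to the image $p_2'$ of $p_2$, and I finish by concatenating short explicit sequences joining $p_1'$ and $p_2'$ to the lower-left and upper-right corners of the $2n\times 2n$ grid, and similarly for blue. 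Since $B$ and $R$ do not intersect by hypothesis, neither do the reflected images (reflection is a bijection on grid points), and the added connector arcs are placed in disjoint regions of the extended grid, so $B'$ and $R'$ do not intersect, contradicting (b).

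The main obstacle is assembling the reflected pieces into a single indexed sequence rather than merely a set, because \VZ\ cannot form unbounded sums of block-lengths. This is resolved by the observation that each input edge contributes only a bounded number of output edges (at most two reflected copies plus a constant-bounded connector), so I pad every contribution with a fixed number of harmless repeated edges to a uniform block length $c$. With this padding, the $i$-th input edge occupies the output indices $[ci,\, c(i+1))$, a \SigZB-definable scheme. Once the sequences $B'$ and $R'$ are in hand, the remaining verification — that $B'$ forms a closed curve, that $R'$ connects the specified corners, and that they are disjoint — consists of local checks expressible by \SigZB-formulas, and therefore goes through in \VZ.
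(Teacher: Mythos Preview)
Your (a)$\Rightarrow$(b) direction is fine and matches the paper. The gap is in (b)$\Rightarrow$(a), specifically in your padding argument.

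You claim that each input edge contributes ``at most two reflected copies plus a constant-bounded connector,'' and then pad to a uniform block length $c$. But the connectors are \emph{not} constant-bounded: when an edge of $R$ crosses a diagonal at a point $r$, the two reflected images $r',r''$ lie on different sides of the outer square, and the dashed grid path joining them has length $2\ell+1$ for some $\ell$ with $1\le \ell<n$ depending on where $r$ sits on the diagonal. So the contribution of a single input edge can be $\Theta(n)$, and your uniform constant $c$ does not exist. Moreover, even if you let $c$ grow with $n$, padding with ``harmless repeated edges'' will not produce a valid sequence in the sense required by (b): the output must be a simple path (distinct vertices), so you cannot simply repeat edges or backtrack.

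The paper handles exactly this obstacle, and its fix is more delicate than yours. It refines the grid by a factor of $8n$, so that every original edge first becomes a straight path of $8n$ small edges; then it pads each such block up to a fixed length of $16n^2$ by inserting a zigzag (north--south $2n$ times) inside one quarter of the corresponding $8n\times 8n$ square. For an inward edge the zigzag has arm-length $4n-2$; for an outward edge followed by its connector of length $(2\ell+1)8n$, the zigzag arm-length is shortened to $4n-4\ell-2$ so that the total is again exactly $16n^2$. The point is that $\ell$ is computable in \ACZ\ from the coordinates of the crossing point, so the padding amount for each block is \SigZB-definable, and the zigzags live in disjoint sub-squares so the resulting path is simple. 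With this uniform block length the $j$-th output edge is determined by $e_{\lfloor j/16n^2\rfloor}$, which is the indexing scheme you were aiming for.
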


\begin{proof}
This theorem is proved similarly to the previous theorem.
However, here the reductions have to output sequences of edges, as opposed to just
sets of edges.
In other words, given $j$, we need to specify the $j$-th edge on the paths/curves
produced by the reductions.

For the direction (a) $\Longra$ (b) we can essentially use the same
reduction given in the proof of Theorem \ref{t:conn-set} (see Figure \ref{f:reduce-st}).
Given sequences $B$ and $R$ for the blue path from \tuple{0,n} to \tuple{n,0} and the red path 
from \tuple{0,0} to \tuple{n,n},
it is straightforward to define the new sequences of edges for the curve $B'$ and path $R'$
described in the first part of the previous proof.

The proof of (b) $\Longra$ (a) is a bit more involved than before.
Consider the reduction depicted in Figure \ref{f:st-JCT-2} and let $R'$ be the
red path from $p_1'$ to $p_2'$.
To specify the sequence of red edges on $R'$
an immediate problem is to compute its length,
and this requires computing the total length of all dashed red lines.
In general, such computation is not in \ACZ\ and hence not formalizable in \VZ.

To get around this problem, the idea is to refine the grid
so as to make $R'$ exactly $16n^2$ times longer than $R$,
the original red path from $p_1$ to $p_2$.
(Similarly for the new blue path $B'$ that connects $q_1$ and $q_3$.)
Thus, let path $R$ be the sequence $e_0, e_1, \ldots, e_k$.
We will refine the grid so that each edge $e_i$ gives rise to precisely $16n^2$ red edges 
$$e'_{16n^2i}, e'_{16n^2i + 1}, \ldots, e'_{16n^2(i+1)-1}$$
on $R'$.
As a result, for any $j$ the $j$-th edge $e_j'$ will be easily specified
by looking at $e_{\floor{j/16n^2}}$.

We will distinguish between two kinds of undashed edges on $R'$.
The first kind, called ``outward edges'',
consists of those that are followed by (two) dashed lines,
for example $(v',r')$ in Figure \ref{f:st-JCT-2}.
All other undashed edges on $R'$ are called ``inward edges''.
We will turn every inward edge into a path of length $16n^2$,
and every outward edge, together with the dashed lines immediately following it,
into a path of length $16n^2$.

\begin{figure}[htb]
\begin{center}
\psset{unit=0.5cm}
\begin{pspicture}(8,9)
\psgrid[gridlabels=0pt,subgriddiv=1,griddots=5](0,1)(8,9)
\rput(8,0.5){$p_1'$}
\rput(0,0.5){$u'$}
\qdisk(8,1){2pt}
\qdisk(0,1){2pt}
\psline[linecolor=red,linewidth=1.8pt]{-}(8,1)(7,1)
\psline[linecolor=red,linewidth=1.8pt]{-}(7,1)(7,3)
\psline[linecolor=red,linewidth=1.8pt]{-}(7,3)(6,3)
\psline[linecolor=red,linewidth=1.8pt]{-}(6,3)(6,1)
\psline[linecolor=red,linewidth=1.8pt]{-}(6,1)(5,1)
\psline[linecolor=red,linewidth=1.8pt]{-}(5,1)(5,3)
\psline[linecolor=red,linewidth=1.8pt]{-}(5,3)(4,3)
\psline[linecolor=red,linewidth=1.8pt]{-}(4,3)(4,1)
\psline[linecolor=red,linewidth=1.8pt]{-}(4,1)(0,1)
\end{pspicture}
\end{center}
\caption{Inward edge $(p_1',u')$ is turned into a path of length $16n^2$ (here $n=1$)}
\label{f:st-refine}
\end{figure}
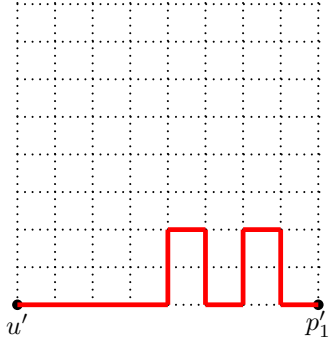

To this end we will refine the grid $8n$ times (thus each unit square becomes an $8n\times 8n$ square).
As a result, each grid edge becomes a path of length $8n$.
For each inward edge we further lengthen this path by making its first half travel inside one quarter of the
$8n\times 8n$ square (say to its right).
For example, in Figure \ref{f:st-refine} we travel north then south $2n$ times,
traversing $(4n-2)$ edges each time,
so in effect we add to the red path $4n$ vertical segments of length $(4n-2)$ each.
Consequently we obtain a path of length $16n^2$.

Now consider an outward edge and the two dashed lines that immediately follow it.
Without loss of generality consider the edge $(v',r')$ and the path $(r',r'')$
in Figure \ref{f:st-JCT-2}.
From the coordinates of $r'$ we can compute their total length, which is of the form $(2\ell+1)$
for some $1\le \ell < n$.
After refinement these become a path of length $(2\ell + 1)8n$.
To increase the length of this path to $16n^2$ we increase the length of the $8n$-edge path $(v',r')$
by $4n(4n-4\ell - 2)$
by making its first half travel inside one quarter of the $8n\times 8n$ square to the right of $(v',r')$ as above.
Here we also go north then south $2n$ times, but now each north-south path is of length $4n-4\ell -2$.

Similarly we turn $B'$ into a path of length exactly $16n^2$ times the length of $B$.
It can be seen that
the new segments that we add for each original undashed edge take up only one quarter of the $8n\times 8n$ square
to its right,
therefore they do not create intersection.
By (b) $B'$ and $R'$ intersect,
it follows that $B$ and $R$ intersect.
\end{proof}

%

\section{Propositional Proofs}

Buss \cite{Buss:06:tcs} defines the $STCONN(n)$ tautologies to
formalize the st-connectivity principle (see the previous section),
where the blue path and red path are given as sets of edges.
Thus there are propositional variables $e_b$ and $e_r$ for each
horizontal and vertical edge slot $e$ in the $n\times n$ grid,
where $e_b$ asserts that edge $e$ is a blue edge and $e_r$ asserts
that $e$ is a red edge.  $STCONN(n)$ is the negation of a CNF
formula whose clauses assert that the four corners each have
degree one, the upper left and lower right
corners each touch blue edges, the other two corners each touch
red edges, every other node has degree zero or two and cannot
touch both a blue and red edge.

Propositional proofs in \ACZFrege-systems (also called constant-depth Frege
systems \cite{Krajicek:95:book}) allow formulas with unbounded AND
and OR gates, as long as the total depth of the formula does not
exceed a constant $d$, which is a parameter of the system.
Every true \SigZB\ formula $\varphi$ translates into a polynomial size family
of constant depth propositional tautologies which have polynomial size
\ACZFrege-proofs if $\varphi$ is provable in \VZ\ (see \cite{Cook:Nguyen}).

The propositional proof system \ACZTwoFrege\ (resp. \TCZFrege)
is an extension of \ACZFrege\ which allows parity gates 
$\oplus(x_1,\cdots,x_n)$ (resp. threshold gates $T_k(x_1,\cdots,x_n)$)
and has suitable axioms defining these gates.  (The formula $T_k$
is true when at least $k$ of the inputs are true.)
There are propositional translation results as above, where
\SigZB-theorems of the theory $\VZ(2)$ (resp. \VTCZ)
translate into polynomial size
\ACZTwoFrege-proofs (resp. \TCZFrege-proofs) (see \cite{Cook:Nguyen}).

It is shown in \cite{Buss:06:tcs} that the tautologies $STCONN(n)$
have polynomial size \TCZFrege\ proofs.  The following stronger
statement follows immediately from Theorem \ref{t:conn-set}
and the translation theorem for $VZ(2)$.

\begin{theorem}
$STCONN(n)$ has polynomial size \ACZTwoFrege\ proofs.
\end{theorem}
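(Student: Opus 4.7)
The plan is to obtain the propositional result as a direct corollary by chaining together three facts already established in the paper: the Main Theorem for $\VZ(2)$ (Theorem \ref{t:main-set}), the \VZ-equivalence of the discrete JCT and st-connectivity in the set-of-edges setting (Theorem \ref{t:conn-set}), and the propositional translation theorem for $\VZ(2)$ stated at the beginning of this section.

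First, I would observe that the st-connectivity principle formulated as clause (b) of Theorem \ref{t:conn-set} can be written as a \SigZB\ formula $\sigma(n,B,R)$ in the two-sorted vocabulary \LTwoA: the hypotheses (that $B$ is a set of edges whose degree sequence connects $\tuple{0,n}$ to $\tuple{n,0}$, and analogously for $R$) and the conclusion (that $B$ and $R$ share an incident vertex) are each expressible by \SigZB\ formulas over $n,B,R$. Moreover, the natural propositional translation of $\sigma(n,B,R)$ at a fixed size $n$ yields precisely the $STCONN(n)$ tautology family, with propositional variables $e_b,e_r$ corresponding to the bits of the strings $B$ and $R$.

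Second, I would combine the two key theorems to conclude $\VZ(2)\vdash\sigma(n,B,R)$. By Theorem \ref{t:main-set}, $\VZ(2)$ proves clause (a) of Theorem \ref{t:conn-set}. By Theorem \ref{t:conn-set}, \VZ\ proves the implication (a) $\Longra$ (b); since $\VZ(2)$ extends \VZ, it inherits this implication. Chaining these gives $\VZ(2)\vdash$ (b), i.e. $\VZ(2)\vdash\sigma(n,B,R)$.

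Finally, applying the translation theorem for $\VZ(2)$ stated in this section\,---\,every \SigZB-theorem of $\VZ(2)$ translates into a polynomial-size family of \ACZTwoFrege\ proofs of its propositional translations\,---\,we obtain polynomial-size \ACZTwoFrege\ proofs of $STCONN(n)$. There is no real obstacle here: all of the combinatorial work has already been done in proving Theorem \ref{t:main-set} (the parity argument along vertical columns) and Theorem \ref{t:conn-set} (Thapen's reflection construction). The only minor point of care is to verify that the \SigZB\ formalization of the st-connectivity principle translates to exactly the $STCONN(n)$ tautologies of Buss (up to trivial syntactic equivalences handled by \ACZFrege), so that the translated proofs are proofs of $STCONN(n)$ itself rather than of a variant.
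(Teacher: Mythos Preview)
Your proposal is correct and matches the paper's own argument: the paper states that the theorem follows immediately from Theorem~\ref{t:conn-set} and the translation theorem for $\VZ(2)$, with Theorem~\ref{t:main-set} supplying the $\VZ(2)$-provability of clause~(a) that feeds into the equivalence. You have simply made explicit the chain (Theorem~\ref{t:main-set} $\Rightarrow$ (a) in $\VZ(2)$; Theorem~\ref{t:conn-set} gives (a)$\Rightarrow$(b) in $\VZ\subseteq\VZ(2)$; translation) that the paper compresses into a single sentence.
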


The Main Theorem for \VZ, which states the st-connectivity principle
when paths are given as sequences of edges, translates into
a family $STSEQ(n)$ of tautologies.  Here the propositional variables
have the form $b_{e,i}$ and $r_{e,i}$, which assert that edge $e$
is the $i$-th edge in the blue (resp. red) path, $1\le i\le n^2$.
From the Main Theorem and the translation theorem for \VZ\ we obtain

\begin{theorem}
$STSEQ(n)$ has polynomial size \ACZFrege\ proofs.
\end{theorem}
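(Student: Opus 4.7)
The plan is to derive this as an immediate corollary of the Main Theorem for \VZ (Theorem \ref{t:JCT-seq}), the equivalence Theorem \ref{t:conn-seq}, and the propositional translation theorem for \VZ\ recalled at the start of this section. Concretely, every \SigZB\ formula provable in \VZ\ translates into a polynomial-size family of tautologies with polynomial-size \ACZFrege-proofs, so it suffices to identify a \SigZB\ formula $\psi(n,B,R)$ which is provable in \VZ\ and whose $n$-th propositional translation is (tautologically equivalent to) $STSEQ(n)$.

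First, I would spell out the formalization of the sequence-of-edges st-connectivity principle as a \SigZB\ formula. The two strings $B$ and $R$ encode the blue and red sequences in such a way that the bits $B(\langle e,i\rangle)$ and $R(\langle e,i\rangle)$ correspond directly to Buss's propositional variables $b_{e,i}$ and $r_{e,i}$. The auxiliary predicates ``$B$ is a valid sequence of edges connecting $\tuple{0,n}$ to $\tuple{n,0}$'', ``$R$ is a valid sequence of edges connecting $\tuple{0,0}$ to $\tuple{n,n}$'', and ``$B$ and $R$ do not intersect'' are all expressible by \SigZB\ formulas, since they only quantify (boundedly) over edges and grid points. Writing $\psi(n,B,R)$ as the statement ``if $B$, $R$ are valid sequences connecting the stated corners then they intersect'' gives a \SigZB\ formula whose propositional translation matches $STSEQ(n)$ in the natural way.

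Next, I would invoke Theorem \ref{t:conn-seq}, which states that \VZ\ proves the equivalence between the sequence-of-edges Jordan Curve Theorem (part (a)) and the sequence-of-edges st-connectivity principle (part (b)). Combined with Theorem \ref{t:JCT-seq} (the Main Theorem for \VZ, which establishes (a) in \VZ), we conclude that \VZ\ proves part (b), i.e., \VZ\ proves $\psi(n,B,R)$. Applying the propositional translation theorem yields polynomial-size \ACZFrege-proofs of $\psi(n,B,R)[n]$, and hence of $STSEQ(n)$.

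The only real obstacle is the bookkeeping needed to confirm that the \SigZB\ formalization of ``valid sequence of edges connecting two points'' and ``two sequences intersect'' has the constant quantifier depth required for the propositional translation to stay at constant depth with polynomially many connectives. Since each of these predicates involves only bounded number quantifiers over indices and edge/point codes—hence lies in \ACZ\ by the \SigZB\ Representation Theorem—this is routine, and the resulting $\ACZFrege$-proofs have depth bounded by a constant $d$ independent of $n$, completing the argument.
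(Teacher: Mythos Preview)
Your proposal is correct and follows essentially the same route as the paper: invoke the Main Theorem for \VZ\ (via Theorem \ref{t:conn-seq} to pass to the st-connectivity formulation) and then apply the propositional translation theorem for \VZ. The paper's own proof is just the single sentence preceding the theorem; your write-up is simply a more explicit unpacking of that sentence, including the observation that the relevant hypotheses are \SigZB.
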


{\bf Acknowledgment}:
The paper is the full version of the conference paper \cite{Nguyen:Cook:07:lics}.
We would like to thank the referees of the conference for constructive comments,
and Neil Thapen for the discussions related to Section 5.

\bibliographystyle{alpha}
\bibliography{/home/2009/pnguyen/latex/ntp}

\end{document}